\newtheorem{theorem}{Theorem}[section]
\newtheorem{lemma}[theorem]{Lemma}
\newtheorem{proposition}[theorem]{Proposition}
\newtheorem{corollary}[theorem]{Corollary}
\newtheorem{definition}[theorem]{Definition}
\newtheorem{remark}[theorem]{Remark}
\def\<{{\langle}} \def\>{{\rangle}}       
\title{A Note on Property Testing Sum of Squares and Multivariate Polynomial Interpolation}
\author{
  Aaron Potechin\thanks{Institute for Advanced Study.
    Supported by the Simons Collaboration for Algorithms and Geometry and by the NSF under agreement No. CCF-1412958.
    Part of this work was done while at Cornell University. }
  \and
  Liu Yang\thanks{Yale University}}
\date{\today}
\begin{document}
\maketitle

\begin{abstract}
In this paper, we investigate property testing whether or not a degree d multivariate polynomial is a sum of squares or is far from a sum of squares. We show that if we require that the property tester always accepts YES instances and uses random samples, $n^{\Omega(d)}$ samples are required, which is not much fewer than it would take to completely determine the polynomial. To prove this lower bound, we show that with high probability, multivariate polynomial interpolation matches arbitrary values on random points and the resulting polynomial has small norm. We then consider a particular polynomial which is non-negative yet not a sum of squares and use pseudo-expectation values to prove it is far from being a sum of squares.
\end{abstract}

\thispagestyle{empty}
.\newpage

\section{Introduction}
In recent years, property testing and the sum of squares hierarchy have both been fruitful areas of research. In property testing, we aim to find algorithms which only look at a small portion of the input. However, instead of requiring an exact answer, we only require that we can distinguish between a function which has a given property and a function which is far from having that property. Thus far, property testers have been found for many properties of boolean functions including monotonicity, dictatorships, juntas, and being low degree \cite{monotonicity,ParnasRS02,Blaisjuntas}. For a survey on results in property testing, see Oded Goldreich's book \cite{Goldreichbook}.

The sum of squares hierarchy, independently investigated by Nesterov \cite{nesterov}, Shor \cite{Shor87}, Parrilo \cite{Parrilo00}, and Lasserre \cite{Lasserre01}, is a hierarchy of semidefinite programs which has the advantages of being broadly applicable, powerful, and in some sense, simple. The sum of squares hierarchy is broadly applicable because it can be applied to any system of polynomial equations over the reals and most problems of interest can be put into this form. The sum of squares hierarchy is surprisingly powerful; it captures the best known algorithms for several problems including the Goemans-Williamson algorithm for maximum cut \cite{GoemansW95}, the Geomans-Linial relaxation for sparsest cut (analyzed by Arora,Rao,Vazirani \cite{AroraRV09}), and the subexponential time algorithm found by Arora, Barak, and Steurer \cite{AroraBS10} for unique games. Finally, the sum of squares hierarchy is in some sense simple as all that it uses is the fact that squares must be non-negative over the real numbers. For a survey on the sum of squares hierarchy, see Barak and Steurer's survey \cite{bd}. 

A central question in researching the sum of squares hierarchy is determining whether a given polynomial is non-negative and whether it is a sum of squares. In the setting where we know all the coefficients of the polynomial, we can determine whther it is a sum of squares in polynomial time using semidefinite programming while determining whether it is non-negative is NP-hard. In this paper, we consider the question of property testing whether a polynomial is a sum of squares on random samples. In this setting, rather than knowing the full polynomial, we only have its value on randomly sampled points. However, we only need to determine whether it is a sum of squares or is far from being a sum of squares.

This work is also related to research on the difference between non-negative polynomials and polynomials which are sum of squares. This research began with Hilbert \cite{Hilbert}, who proved the existence of polynomials which are non-negative yet not a sum of squares. The first explicit example of such a polynomial was found by Motzkin \cite{Motzkin67}. More recently, Bleckherman \cite{Blekherman} showed that there are significantly more polynomials which are non-negative than polynomials which are sums of squares. That said, to the best of our knowledge these papers do not analyze the distance of these polynomials from being sums of squares.

\subsection{Results and Outline}
Our main result is the following
\begin{theorem}
For all $d \geq 2$ and all $\delta > 0$, there is an $\epsilon > 0$ such that for sufficiently large $n$, if we require that our property tester always accepts YES instances and use random samples then 
property testing whether a degree $2d$ polynomial is a sum of squares requires at least $n^{\frac{d}{2} - \delta}$ samples.
\end{theorem}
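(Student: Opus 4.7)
The plan is a Yao-style indistinguishability argument. I would fix a single polynomial $p$ of degree $2d$ that is non-negative but far from any sum of squares, and then show that with high probability over the tester's random sample set $x_1,\ldots,x_k$ (whenever $k\le n^{d/2-\delta}$) there exists a bona fide SOS polynomial $q$ of degree $2d$ with $q(x_i)=p(x_i)$ for every $i$. Because the tester must accept YES-instances with probability $1$ and its output is determined by the evaluation vector $(p(x_1),\ldots,p(x_k))$, it cannot distinguish $p$ from $q$, and so must accept $p$ as well. Combined with $p$ being $\epsilon$-far from SOS this contradicts correctness on NO-instances and forces $k>n^{d/2-\delta}$.

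For the hard instance I would use an $n$-variable, degree-$2d$ analogue of a classical non-negative non-SOS form (the Motzkin polynomial when $d=3$, and Robinson-type forms or products with $\|x\|^{2(d-3)}$ in other cases). To quantify farness I would construct a degree-$2d$ pseudo-expectation $\tilde E$ with $\tilde E[r^2]\ge 0$ for every $r$ of degree at most $d$ but $\tilde E[p]<0$. Pairing $\tilde E$ against $p-s$ for any SOS polynomial $s$ of prescribed norm yields the quantitative bound $\|p-s\|\ge c(\epsilon)$ in the relevant norm; crucially, the restriction to norm-bounded $s$ is what makes the norm control on the interpolant in the final step necessary, since otherwise $f^2$ could be a wild SOS polynomial outside the regime covered by the duality.

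The interpolation step is the heart of the argument and the main technical obstacle. Writing $N=\binom{n+d}{d}=\Theta(n^d)$, let $V\in\RR^{k\times N}$ be the evaluation matrix with entries $V_{i,\alpha}=x_i^\alpha$ as $\alpha$ ranges over exponent vectors of degree at most $d$. For a suitable product sampling distribution, the Gram matrix $VV^\top$ has expected diagonal of order $N$ and expected off-diagonal entries of order $\sqrt{N}$ (by independence of distinct rows), so a concentration argument — matrix Bernstein, or a direct moment calculation exploiting the product structure of the samples — shows that as long as $k\le n^{d/2-\delta}$, $VV^\top$ is a constant-factor perturbation of a multiple of the identity and hence invertible with smallest singular value $\Omega(N)$. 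Choosing target values $v_i=\sqrt{p(x_i)}\ge 0$, the minimum-norm solution $f=V^\top(VV^\top)^{-1}v$ is then a degree-$d$ polynomial of norm polynomial in $n$ and $\max_i\sqrt{p(x_i)}$, and $q:=f^2$ is an SOS polynomial of degree $2d$ matching $p$ on every sample. The delicate point is pushing the concentration all the way to the $n^{d/2-\delta}$ threshold (rather than the much weaker bound one obtains from naive Chebyshev estimates on a single entry), and tracking constants carefully enough that the resulting $q=f^2$ lies in the norm-bounded SOS cone against which the pseudo-expectation certifies $p$ to be $\epsilon$-far.
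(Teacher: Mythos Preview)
Your proposal follows essentially the same route as the paper: the paper also fixes a generalized Motzkin polynomial in two of the $n$ variables, certifies its distance from the norm-bounded SOS cone via an explicit pseudo-expectation together with the Cauchy--Schwarz bound you allude to (Lemma~\ref{distlowerbound}), and shows that on $m\le n^{d/2-\delta}$ random Gaussian samples one can interpolate the square roots $\sqrt{p(x_i)}$ by a degree-$d$ polynomial $g$ and set $q=g^2$. One small point: the paper works in the Hermite basis rather than the monomial basis, so that the off-diagonal entries of the Gram matrix have mean exactly zero and the diagonal entries have mean exactly $C\sim n^d$; with raw monomials your ``expected off-diagonal of order $\sqrt N$'' is not quite right, and the analysis is messier.

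The one place where your sketch is thinner than it should be is the control of $\|q\|=\|f^2\|$. You describe $f$ as having ``norm polynomial in $n$'' and then speak of ``tracking constants carefully''; in fact the interpolation step gives $\|f\|=(1+o(1))\|v\|/\sqrt{C}=o(1)$, and the paper then devotes an entire subsection to bounding $\|f^2\|$ via a separate trace-moment analysis of the matrix $Q(H\otimes H)$. It is precisely \emph{this second analysis}, not the Gram-matrix conditioning, that forces the threshold down to $n^{d/2-\delta}$; the Gram matrix is already close to a multiple of the identity for all $m\le n^{d-\delta}$. You could sidestep the paper's Section~5.4 entirely via Gaussian hypercontractivity, $\|f^2\|_2=\|f\|_4^2\le 3^{d}\|f\|_2^2$, which is both simpler and (since it needs only the $\|f\|$ bound) would push the lower bound to $n^{d-\delta}$. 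But as written, your outline does not say how $\|f^2\|$ is controlled, and ``norm polynomial in $n$'' is the wrong direction: the whole point is that $\|f\|$, and hence $\|f^2\|$, is $o(1)$.
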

Along the way, we prove the following result for multivariate polynomial interpolation on random points:
\begin{theorem}
For all $d$ and all $\delta > 0$, given points $p_1,\cdots,p_m \in \mathbb{R}^n$ randomly sampled from the multivariate normal distribution with covariance matrix $Id$, if $n$ is sufficiently large and $m \leq n^{d-\delta}$ then with very high probability, for all $v_1,\cdots,v_m$, there is a polynomial $g$ of degree $d$ such that 
\begin{enumerate}
\item $\forall i, g(p_i) = v_i$
\item $||g||$ is $O(\frac{||v||}{n^{\frac{d}{2}}})$
\item If we further have that $m \leq n^{\frac{d}{2}-\delta}$ then $||g^2||$ is $\tilde{O}(\frac{||v||^4}{n^{2d}})$
\end{enumerate}
\end{theorem}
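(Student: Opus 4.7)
My strategy is to encode the interpolation problem as a linear system and control the associated random matrix using concentration of measure. Let $N=\binom{n+d}{d}=\Theta(n^d)$ be the number of monomials of degree at most $d$, and let $M\in\mathbb{R}^{m\times N}$ be the evaluation matrix with $M_{i,\alpha}=p_i^{\alpha}$. An interpolating polynomial $g$ corresponds to a coefficient vector $c$ solving $Mc=v$; such a $c$ exists iff $M$ has full row rank $m$, and the minimum-norm solution is $c=M^T(MM^T)^{-1}v$, yielding $\|g\|^2=v^T(MM^T)^{-1}v$. All three conclusions will follow from sufficiently strong spectral control of $MM^T$.

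The main technical step is showing that $MM^T$ is well-conditioned. The diagonal entries have mean $\sum_\alpha \mathbb{E}[(p_i^\alpha)^2]=\Theta(n^d)$, dominated by the $\Theta(n^d)$ squarefree multi-indices of degree at most $d$, while the off-diagonals satisfy $\mathbb{E}[(MM^T)_{ij}]=\sum_\alpha (\mathbb{E}[p^\alpha])^2=O(n^{\lfloor d/2\rfloor})$ because any multi-index with an odd exponent vanishes by independence of $p_i$ and $p_j$. I would then show $\|MM^T-\mathbb{E}[MM^T]\|=o(n^d)$ with very high probability, via matrix Bernstein or a trace-moment argument based on Gaussian hypercontractivity to tame the heavy tails of the degree-$2d$ entries. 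Combined with the mean estimates, this gives $\lambda_{\min}(MM^T)=\Omega(n^d)$ whenever $m\le n^{d-\delta}$, which proves part 1 (existence of the interpolant) and part 2 (the bound $\|g\|^2\le O(\|v\|^2/n^d)$).

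For the third bound I would expand $\|g^2\|^2=\sum_\gamma\bigl(\sum_{\alpha+\beta=\gamma}c_\alpha c_\beta\bigr)^2$ and substitute $c=M^T(MM^T)^{-1}v$, turning $\|g^2\|^2$ into a quartic form in $v$ whose coefficients are degree-$4d$ polynomials in the sample points. Under the stronger hypothesis $m\le n^{d/2-\delta}$, a second round of concentration---now for these degree-$4d$ chaoses---should show that the cross-point contributions are lower order than the diagonal terms, producing the claimed $\tilde{O}(\|v\|^4/n^{2d})$ bound.

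I expect the sharp spectral bound of Step 2 to be the main obstacle. The off-diagonal entries of $MM^T$ have expected magnitude $n^{\lfloor d/2\rfloor}$, so a naive Frobenius bound on the $m\times m$ noise block vastly exceeds the target scale $n^d$; only substantial row-to-row cancellation saves the argument. Extracting this cancellation, most likely via graph-matrix norm bounds of the kind developed for sum-of-squares lower bounds or a Hanson--Wright-type inequality for polynomial chaos in Gaussians, is the technical heart of the proof, and the same machinery will need to be reused in Step 3 at the degree-$4d$ level.
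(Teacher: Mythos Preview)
Your overall architecture---form an $m\times m$ kernel matrix from evaluations, show it is close to a scalar multiple of the identity via trace-moment bounds, and read off the norm of the minimum-norm interpolant---is exactly the route the paper takes. But there is a genuine gap in your choice of basis.

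You work in the monomial basis and assert $\|g\|^2 = v^T(MM^T)^{-1}v$. This equates $\|g\|$ with the Euclidean norm $\|c\|_2$ of the coefficient vector, which is false: the norm in the theorem is the $L^2$ norm over the Gaussian measure, and monomials are \emph{not} orthonormal there. The Gram matrix $G_{\alpha\beta}=\mathbb{E}[x^{\alpha+\beta}]$ has condition number growing with $n$ (for instance, the column indexed by the constant monomial has $\Theta(n^{\lfloor d/2\rfloor})$ nonzero entries of size $\Theta(1)$), so bounding $\|c\|_2$ does not bound $\|g\|$, and the minimum-$\|c\|_2$ solution is not the minimum-$\|g\|$ solution. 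The same error recurs in your formula $\|g^2\|^2=\sum_\gamma(\sum_{\alpha+\beta=\gamma}c_\alpha c_\beta)^2$.

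The paper resolves this by working throughout in the Hermite basis $\{h_I\}$, which \emph{is} orthonormal for the Gaussian inner product. The evaluation matrix becomes $H_{Ii}=h_I(p_i)$, the kernel matrix is $M=H^TH/C$ with $C=\Theta(n^d)$, and now $\|g\|^2$ genuinely equals the squared $\ell_2$ norm of the Hermite coefficient vector, giving $\|g\|^2=\frac{1}{C}v^TM^{-1}v$. A second dividend of orthonormality is that the off-diagonal entries of $H^TH$ have mean exactly zero (rather than your $O(n^{\lfloor d/2\rfloor})$), so the expectation of the kernel matrix is already a scalar times the identity and only the fluctuations need to be controlled. The trace-power/graph-combinatorics argument you anticipate is then precisely what the paper carries out, both for $M-Id$ and, in the $\|g^2\|$ step, for the matrix $Q(H\otimes H)$ (split into its $j=j'$ and $j\neq j'$ parts). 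Once you switch to the Hermite basis, your plan and the paper's proof essentially coincide.
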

This paper is organized as follows. In Section \ref{prelims} we give definitions and conventions which we will use for the remainder of the paper. In Section \ref{nonnegativitysection}, as a warm-up we consider the question of property testing non-negativity. This question is non-trivial because of how distance is defined in our setting. In Section \ref{sostestersection} we describe our tester for being a sum of squares which we will prove a lower bound against. In Section \ref{interpolationsection}, we prove our theorem on multivariate polynomial interpolation, showing that this tester will accept with high probability as long as the values it receives are non-negative and not too large. Finally, in Section \ref{pseudoexpectationconstruction} we complete our lower bound by giving a non-negative function $f$ of norm $1$ and lower bounding its distance from being a sum of squares using pseudo-expectation values.

\section{Preliminaries}\label{prelims}
For our results, we consider randomly sampling bounded degree real-valued polynomials over the multivariate normal distribution. We use the following conventions
\begin{enumerate}
\item We take $d$ or $2d$ to be the degree of our polynomials and assume that $d$ is a constant.
\item We take $m$ to be the number of sampled points.
\item Often, we will not be precise with functions of $d$ or logarithmic factors, so we absorb such functions into an $\tilde{O}$.
\end{enumerate}
We use the following definitions on the multivariate normal distribution.
\begin{definition}\label{normaldistdef} \ 
\begin{enumerate}
\item $\mathcal{N}(0,1)$ is the univariate normal distribution with probability density $\mu(x) = \frac{1}{\sqrt{2\pi}}e^{-\frac{x^2}{2}}$.
\item $\mathcal{N}(0,Id)$ is the multivariate normal distribution with probability density 
\[
\mu(x_1,\cdots,x_n) = \frac{1}{(2\pi)^{\frac{n}{2}}}e^{-\frac{||x||^2}{2}} = \prod_{i=1}^{n}{\mu(x_i)}
\]
\item For real-valued functions $f,g$, we define the inner product $\langle{f,g}\rangle = \int_{\mathcal{N}(0,Id)}{fg}$. We define $||f|| = \sqrt{\langle{f,f}\rangle}$
\item Given a set of real valued functions $S$ and a function $f$, we define the distance of $f$ from $S$ to be $d(f,S) = \min_{g:g \in S}{\{||f-g||\}}$
\end{enumerate}
\end{definition}
\begin{remark}\label{distdefremark}
It should be noted that this definition of distance differs from the definition of distance commonly used in the property testing literature, which is $d(f,S) = \min_{g:g \in S}{\{\mu(\{x:f(x) \neq g(x)\})\}}$. We use this definition of distance as it is more suitable for analyzing polynomials; if two polynomials have almost identical coefficients they will be very close to each other under Definition \ref{normaldistdef} but will be distance $1$ from each other using this definition.
\end{remark}
To help us index monomials, we use the following definitions:
\begin{definition} \ 
\begin{enumerate}
\item We use $I$ (and occasionally $J$) to denote a multi-set of elements in $[1,n]$.
\item We define $x_I = \prod_{i \in I}{x_i}$.
\item We define $|I|$ to be the total number of elements of $I$ (counting multiplicities)
\item Given an $I$ and a $k \in [1,n]$, we define $I_k$ to be the multiplicity of $k$ in $I$.
\item Given an $I$ and a $t \in [1,|I|]$, we define $I(t)$ to be the number such that $\sum_{j=1}^{I(t)-1}{I_j} < t$ but 
$\sum_{j=1}^{I(t)}{i_j} \geq t$. In other words, if we put the elements of $I$ in sorted order, $I(t)$ will be the tth element which appears.
\end{enumerate}
\end{definition}
\begin{remark}
Sometimes we will also attach subscripts to $I$. To distinguish between this and the notation above, we will only use the above notation with the letters $k$ and $t$ and in the case where both occur, we will put the subscript in parentheses on the inside. For example, if we want the multiplicity of $k$ in $I_j$ then we will write $(I_j)_k$
\end{remark}
For our analysis, it will be extremely useful to work with the orthonormal basis of polynomials. For the multivariate normal distribution, this basis is the Hermite polynomials and we use the following definitions
\begin{definition} \ 
\begin{enumerate}
\item We define $h_j(x)$ to be the jth Hermite polynomial normalized so that $||h_j|| = 1$
\item Given an $I$, we define $h_{I}(x_1,\cdots,x_n) = \prod_{k=1}^{n}{h_{I_k}(x_k)}$
\end{enumerate}
\end{definition}
\begin{proposition}
The multivariate polynomials $h_I(x_1,\cdots,x_n)$ are an orthonormal basis over $\mathcal{N}(0,Id)$.
\end{proposition}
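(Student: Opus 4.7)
The plan is to reduce the multivariate statement to the well-known univariate orthonormality of Hermite polynomials by exploiting the product structure of $\mathcal{N}(0,Id)$. Since the density factors as $\mu(x_1,\ldots,x_n) = \prod_{k=1}^n \mu(x_k)$, for any two multi-sets $I,J$ I would compute
\[
\langle h_I, h_J \rangle = \int \prod_{k=1}^n h_{I_k}(x_k) h_{J_k}(x_k) \, \mu(x_k)\, dx_k = \prod_{k=1}^n \langle h_{I_k}, h_{J_k} \rangle_{\mathcal{N}(0,1)},
\]
and then invoke the univariate fact that $\langle h_i, h_j\rangle_{\mathcal{N}(0,1)} = \delta_{ij}$ (which is how the $h_j$ were defined). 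The product equals $1$ precisely when $I_k = J_k$ for every $k$, i.e.\ when $I = J$, and is $0$ otherwise; this establishes orthonormality.

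For the basis property, I would argue in two steps. First, the $h_j$ span all univariate polynomials (they form a graded family, with $h_j$ of degree exactly $j$, so every polynomial of degree $D$ lies in $\operatorname{span}\{h_0,\ldots,h_D\}$). Taking products shows $\{h_I\}$ spans every monomial $x_1^{a_1}\cdots x_n^{a_n}$, hence all multivariate polynomials. Second, I would invoke the standard fact that polynomials are dense in $L^2(\mathcal{N}(0,Id))$: the multivariate Gaussian satisfies Carleman's condition (its moment-generating function exists in a neighborhood of the origin), so its moments determine it uniquely and the polynomials are dense. Combined with orthonormality, completeness follows.

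The only genuinely nontrivial ingredient is the $L^2$-density of polynomials, but for the purposes of this paper the statement is being used only on finite-dimensional polynomial subspaces, so if desired one can instead state the proposition restricted to polynomials and simply invoke the graded-basis argument, which is entirely elementary. The main obstacle, such as there is one, is therefore just bookkeeping: keeping track of the product-of-integrals manipulation and being careful that each coordinate $x_k$ is integrated against its own copy of $\mu$. No deeper combinatorial or analytic difficulty arises.
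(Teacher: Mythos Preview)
Your proposal is correct and is the standard argument. The paper itself states this proposition without proof, treating it as a well-known fact about Hermite polynomials and product measures; there is therefore no paper proof to compare against, and your factorization-and-density argument is exactly the expected justification.
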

\begin{corollary}[Fourier decomposition and Parseval's theorem]
For any polynomial $g$ we can write $g = \sum_{I}{{c_I}{h_I}}$ where $c_I = \langle{g,h_I}\rangle$ and we have that $||g||^2 = \sum_{I}{c^2_I}$.
\end{corollary}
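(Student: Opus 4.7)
The plan is to invoke the orthonormal basis property established in the preceding proposition and do nothing more than expand inner products. First I would verify existence of the decomposition $g = \sum_I c_I h_I$ as a finite sum. Since $g$ is a polynomial of some bounded degree $D$, it lies in the finite-dimensional space of polynomials of degree at most $D$. The family $\{h_I : |I| \leq D\}$ is contained in this space, has the right cardinality, and is linearly independent (in fact orthonormal), so it is a basis of this subspace; equivalently, the change of basis from the monomials $\{x_I : |I| \leq D\}$ to $\{h_I : |I| \leq D\}$ is upper-triangular with nonzero diagonal because each univariate Hermite polynomial $h_j$ has degree exactly $j$ with a nonzero leading coefficient, so the tensor product $h_I$ has leading monomial a nonzero multiple of $x_I$. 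Thus a finite expansion $g = \sum_{|I| \leq D} c_I h_I$ exists.

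Next, to identify $c_I$, I would pair both sides of this expansion with $h_I$ and use orthonormality: $\langle g, h_I \rangle = \sum_J c_J \langle h_J, h_I \rangle = c_I$. Finally, for Parseval I would compute
\[
\|g\|^2 = \langle g, g \rangle = \Big\langle \sum_I c_I h_I,\ \sum_J c_J h_J \Big\rangle = \sum_{I,J} c_I c_J \langle h_I, h_J \rangle = \sum_I c_I^2,
\]
using bilinearity and then orthonormality. There is no real obstacle: the corollary is essentially the content of what it means for $\{h_I\}$ to be an orthonormal basis, and the only point requiring a sentence of justification is that the expansion terminates, which follows from the triangular relationship between $h_I$ and the monomials.
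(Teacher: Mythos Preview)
Your proposal is correct and is exactly the standard unpacking of what it means for $\{h_I\}$ to be an orthonormal basis. The paper does not even supply a proof of this corollary---it is stated as an immediate consequence of the preceding proposition---so your argument is precisely the implicit reasoning the paper relies on.
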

\begin{remark}
Our results can be generalized to different product distributions. To do this, simply replace the Hermite polynomials with the appropriate orthonormal basis in a single variable.
\end{remark}
Finally, we need the concept of pseudo-expectation values, which is extremely useful for analyzing the sum of squares hierarchy. As we show below, pseudo-expectation values allow us to lower bound the distance of a degree $2d$ polynomial $f$ from being a sum of squares.
\begin{definition}
We define degree $2d$ pseudo-expectation values to be a linear map $\tilde{E}$ from polynomials of degree at most $2d$ to $\mathbb{R}$ which satisfies the following conditions:
\begin{enumerate}
\item $\tilde{E}[1] = 1$
\item $\forall g: deg(g) \leq d, \tilde{E}[g^2] \geq 0$
\end{enumerate}
\end{definition}
This second condition can be equivalently stated as follows
\begin{definition}
Given degree $2d$ pseudo-expectation values $\tilde{E}$, define the moment matrix $M$ to be the matrix with rows and columns indexed by monomials $\{x_I: |I| \leq d\}$ and entries $M_{IJ} = \tilde{E}[{x_I}x_J]$
\end{definition}
\begin{proposition}
The condition that $\forall g: deg(g) \leq d, \tilde{E}[g^2] \geq 0$ is equivalent to the condition that $M \succeq 0$.
\end{proposition}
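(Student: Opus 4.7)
The plan is to prove both directions of the equivalence by directly translating the condition on $\tilde{E}[g^2]$ into a quadratic form in the coefficients of $g$, so that the positive semidefiniteness of $M$ becomes visible. The key observation is that the monomials $\{x_I : |I| \leq d\}$ form a basis for the space of polynomials of degree at most $d$, so an arbitrary degree-$d$ polynomial $g$ can be written uniquely as $g = \sum_{I : |I| \leq d} c_I x_I$ for some real coefficients $c_I$, and conversely every such coefficient vector gives rise to a valid $g$.

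First I would expand $g^2 = \sum_{I,J} c_I c_J x_I x_J$ and then apply linearity of $\tilde{E}$ to obtain
\[
\tilde{E}[g^2] \;=\; \sum_{I,J} c_I c_J \, \tilde{E}[x_I x_J] \;=\; \sum_{I,J} c_I c_J M_{IJ} \;=\; c^T M c,
\]
where $c$ is the column vector whose entries are the coefficients $c_I$. This computation is purely formal and uses only the linearity of $\tilde{E}$ together with the definition of $M$.

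With this identity in hand, both directions follow immediately. If $M \succeq 0$, then $c^T M c \geq 0$ for every coefficient vector $c$, hence $\tilde{E}[g^2] \geq 0$ for every polynomial $g$ of degree at most $d$. Conversely, if $\tilde{E}[g^2] \geq 0$ for every such $g$, then since every real vector $c$ indexed by $\{I : |I| \leq d\}$ arises as the coefficient vector of some polynomial $g$ of degree at most $d$, we get $c^T M c \geq 0$ for every $c$, which is exactly the definition of $M \succeq 0$.

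There is no real obstacle here; the only thing to be careful about is the bookkeeping between the polynomial $g$ and its coefficient vector, and in particular the observation that the correspondence $g \leftrightarrow c$ is a bijection between degree-$\leq d$ polynomials and real vectors indexed by $\{I : |I| \leq d\}$, which is what makes the equivalence tight in both directions.
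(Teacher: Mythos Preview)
Your argument is correct and is the standard proof of this well-known equivalence. The paper itself states the proposition without proof, treating it as a routine fact; your write-up supplies exactly the expected justification via $\tilde{E}[g^2] = c^T M c$ and the bijection between polynomials and coefficient vectors.
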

We now show how pseudo-expectation values can be used to show a lower bound on how far a polynomial $f$ is from being a sum of squares of degree $\leq 2d$.
\begin{lemma}\label{distlowerbound}
Given pseudo-expectation values $\tilde{E}$, if $\tilde{E}[f] < 0$ then for all $g$ of degree at most $2d$ such that $g$ is a sum of squares, $||f-g||^2 \geq \frac{(\tilde{E}[f])^2}{\sum_{I:|I| \leq 2d}{(\tilde{E}[h_I])^2}}$
\end{lemma}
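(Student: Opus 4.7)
The plan is to combine three standard ingredients: the positivity of $\tilde{E}$ on sums of squares, Parseval's theorem, and Cauchy--Schwarz applied to the Hermite expansion of $f - g$.

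First, I would observe that since $g$ is a sum of squares of degree at most $2d$, we can write $g = \sum_{i} q_i^2$ with $\deg(q_i) \leq d$. By linearity of $\tilde{E}$ and the second pseudo-expectation condition, $\tilde{E}[g] = \sum_i \tilde{E}[q_i^2] \geq 0$. Combined with the hypothesis $\tilde{E}[f] < 0$, this yields
\[
\tilde{E}[f-g] = \tilde{E}[f] - \tilde{E}[g] \leq \tilde{E}[f] < 0,
\]
so in particular $(\tilde{E}[f-g])^2 \geq (\tilde{E}[f])^2$.

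Next, I would expand $f - g$ in the Hermite basis. Since both $f$ and $g$ have degree at most $2d$, we can write $f - g = \sum_{I : |I| \leq 2d} c_I h_I$, where Parseval gives $\|f-g\|^2 = \sum_I c_I^2$. By linearity of $\tilde{E}$,
\[
\tilde{E}[f-g] = \sum_{I : |I| \leq 2d} c_I \, \tilde{E}[h_I].
\]
Cauchy--Schwarz then yields
\[
(\tilde{E}[f-g])^2 \leq \Bigl(\sum_{I : |I| \leq 2d} c_I^2\Bigr)\Bigl(\sum_{I : |I| \leq 2d} (\tilde{E}[h_I])^2\Bigr) = \|f-g\|^2 \sum_{I : |I| \leq 2d} (\tilde{E}[h_I])^2.
\]
Rearranging and combining with the first step gives the claimed bound:
\[
\|f-g\|^2 \geq \frac{(\tilde{E}[f-g])^2}{\sum_{I : |I| \leq 2d} (\tilde{E}[h_I])^2} \geq \frac{(\tilde{E}[f])^2}{\sum_{I : |I| \leq 2d} (\tilde{E}[h_I])^2}.
\]

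There is no genuine obstacle here; the only subtlety is making sure we only apply Cauchy--Schwarz on the truncated Hermite expansion, which is justified because $f - g$ has degree $\leq 2d$. The real content of the lemma is conceptual: it converts a pseudo-expectation certificate of non-negativity failure into an explicit $L^2$-distance lower bound, where the denominator $\sum_{|I| \leq 2d}(\tilde{E}[h_I])^2$ will later be controlled by the specific pseudo-expectation values constructed in Section \ref{pseudoexpectationconstruction}.
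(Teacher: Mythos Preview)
Your proof is correct and follows essentially the same approach as the paper: expand $f-g$ (the paper uses $g-f$) in the Hermite basis, use $\tilde{E}[g]\geq 0$ to bound $|\tilde{E}[f-g]|\geq |\tilde{E}[f]|$, then apply Cauchy--Schwarz to the Hermite coefficients. The only differences are cosmetic (sign convention and a slightly more explicit justification of $\tilde{E}[g]\geq 0$).
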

\begin{proof}
Write $g-f = \sum_{I}{{c_I}h_I}$. Observe that $\tilde{E}[g-f] = -\tilde{E}[f] + \tilde{E}[g] \geq -\tilde{E}[f]$ which implies that 
$$\sum_{I}{{c_I}\tilde{E}[h_I]} \geq -\tilde{E}[f]$$
Using Cauchy-Schwarz, 
$$\sqrt{\sum_{I}{c^2_I}}\sqrt{\sum_{I}{(\tilde{E}[h_I])^2}} \geq \sum_{I}{{c_I}\tilde{E}[h_I]} \geq -\tilde{E}[f]$$
Since both sides are non-negative, the result follows by squaring both sides and dividing both sides by $\sum_{I}{(\tilde{E}[h_I])^2}$.
\end{proof}
\section{Property testing non-negativity of degree $d$ polynomials}\label{nonnegativitysection}
As a warm-up, in this section we consider the closely related question of property testing whether a degree $d$ polynomial is non-negative or far from being non-negative. While this question is trivial under the definition of distance in Remark \ref{distdefremark}, it is non-trivial with our norm-based definition of distance. We also note that to the best of our knowledge, this problem is open if we consider the distance from the smaller set of non-negative degree $d$ polynomials rather than the set of all non-negative functions.
\begin{theorem}
The following property tester distinguishes with high probability between an $f$ which is a degree $d$ non-negative polynomial and an $f$ which is a degree $d$ polynomial that is $(\epsilon||f||)$-far from being non-negative.
\begin{enumerate}
\item Take $\frac{10B}{\epsilon}$ random samples where $B = e\left(4 + \ln{\left(\frac{1}{\epsilon}\right)}\right)^{2d}$
\item If any sample gives a negative value, return NO. Otherwise, return YES.
\end{enumerate}
\end{theorem}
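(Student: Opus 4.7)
I would split the proof into the YES and NO directions. For YES, the argument is immediate: a non-negative polynomial $f$ satisfies $f(p_i) \geq 0$ at every sample, so the tester never rejects.

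The work is in the NO direction. First I would reduce the distance hypothesis to an $L^2$-mass statement. Taking $g = f^+ := \max(f,0)$, which is in $S$, gives $\|f - g\|^2 = \|f^-\|^2$ where $f^- := \max(-f,0)$; and for any other $g \in S$ one has $|f-g| \geq f^-$ pointwise on $\{f<0\}$ (while $f^- = 0$ on $\{f \geq 0\}$), so $d(f,S) = \|f^-\|$. The assumption becomes $\int_{f<0} f^2 \,d\mu \geq \epsilon^2 \|f\|^2$. The goal is now to convert this $L^2$-mass bound into a lower bound on the probability $q := \Pr_{p \sim \mathcal{N}(0,\mathrm{Id})}[f(p)<0]$, since the tester rejects in the NO case iff at least one of its $10B/\epsilon$ samples lands in $\{f<0\}$.

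The main estimate is a threshold argument. For a parameter $T$ to be chosen, write
\[
\|f^-\|^2 = \int_{-T \leq f < 0} f^2 \,d\mu + \int_{f < -T} f^2 \,d\mu \leq T^2 q + \int_{|f|>T} f^2 \,d\mu.
\]
The second integral I would control by the hypercontractive tail bound for degree-$d$ Gaussian polynomials. Gaussian hypercontractivity gives $\|f\|_r \leq (r-1)^{d/2}\|f\|$ for $r \geq 2$, which via Markov implies a tail bound of the form $\Pr[|f| \geq s\|f\|] \leq \exp(-c_d s^{2/d})$ for an explicit constant $c_d$; integrating $f^2$ against this tail shows $\int_{|f|>T} f^2 \,d\mu$ decays super-polynomially in $T/\|f\|$. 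Choosing $T = \Theta\bigl(\|f\|(4 + \ln(1/\epsilon))^{d/2}\bigr)$ so that this tail piece is at most $\|f^-\|^2/2$, and rearranging, yields a lower bound of the form $q \geq \Omega(\epsilon/B)$ with $B = e(4 + \ln(1/\epsilon))^{2d}$ absorbing the quadratic factor $T^2/\|f\|^2$.

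Given $q \geq \Omega(\epsilon/B)$, the probability that none of the $10B/\epsilon$ samples lies in $\{f<0\}$ is at most $(1-q)^{10B/\epsilon} \leq \exp(-\Omega(1))$, so the tester rejects with high probability. The main obstacle will be the tail calibration: choosing $T$ and tracking the constants in the hypercontractive moment inequality so that the exponential tail integral is absorbed exactly into the factor $B = e(4 + \ln(1/\epsilon))^{2d}$ without leaving extra $\epsilon$-dependent losses. Lining up the exponent $2d$ in $B$ with the $(r-1)^{d/2}$ in hypercontractivity is a delicate but essentially mechanical manipulation with Gaussian $L^r$-norms, and this is where the degree-$d$ hypothesis on $f$ actually enters.
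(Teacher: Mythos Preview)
Your outline is correct and runs parallel to the paper's proof: both reduce the NO case to lower--bounding $p:=\mu(\{f<0\})$ from the hypothesis $\|f^-\|_2\ge\epsilon\|f\|_2$, and both use Gaussian hypercontractivity $\|f\|_q\le(q-1)^{d/2}\|f\|_2$ as the sole place the degree bound enters. The one structural difference is the technical step that converts the $L^2$ mass of $f^-$ into a measure bound. The paper does not pass through a pointwise tail estimate; instead it argues directly that if $f^-$ is supported on measure $\le p$ with $\|f^-\|_2\ge\epsilon$, then (by H\"older, or equivalently by noting that a constant on the support minimizes $\|\cdot\|_q$) one has $\|f^-\|_q^q\ge \epsilon^{?}\,p^{1-q/2}$, and contrasts this with the hypercontractive upper bound $\|f^-\|_q\le(q-1)^{d/2}$, optimizing $q\approx 4+\ln(1/\epsilon)$. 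Your threshold decomposition $\|f^-\|_2^2\le T^2p+\int_{|f|>T}f^2$ followed by a tail integral is a legitimate alternative that lands on the same inequality after optimizing $T$; the paper's route simply skips the intermediate passage through $\Pr[|f|>s]$.

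One quantitative point: from your decomposition, once the tail piece is at most $\|f^-\|^2/2$ you obtain
\[
p\;\ge\;\frac{\|f^-\|^2}{2T^2}\;\ge\;\frac{\epsilon^2\|f\|^2}{2T^2}\;=\;\Omega\!\left(\frac{\epsilon^2}{(4+\ln(1/\epsilon))^{d}}\right),
\]
not $\Omega(\epsilon/B)$ as you wrote; the square on $\epsilon$ is genuine and is not absorbed by the extra $d$ in the exponent of $B$. With only $10B/\epsilon$ samples this bound on $p$ would not force rejection with high probability for small $\epsilon$. In fact the paper's computation has the mirror image of this slip (its extremal constant $-\sqrt{B}$ on measure $\epsilon/B$ has $\|f^-\|_2=\sqrt{\epsilon}$, not $\epsilon$), so both arguments as written actually justify a sample count of order $(\ln(1/\epsilon))^{d}/\epsilon^{2}$ rather than the stated $(\ln(1/\epsilon))^{2d}/\epsilon$. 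This is a bookkeeping issue, not a flaw in either strategy.
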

\begin{proof}
Normalize $f$ so that $||f|| = 1$. Let $f^{-}$ be the negative part of $f$ and let $f^{+}$ be the non-negative part of $f$. If $f$ is $\epsilon$-far from being non-negative yet $\frac{10B}{\epsilon}$ random samples fails to find a negative value of $f$ with high probability then we must have that $||f^{-}|| > \epsilon$ yet $f^{-}$ is supported on a set of measure at most $\frac{\epsilon}{B}$.

However, by a corollary of the hypercontractivity theorem (which applies in the Gaussian setting as well, see O'Donnell's lecture notes on hypercontractivity \cite{LectureA}), for all $q$, 
\[
||f^{-}||_q \leq ||f||_q \leq (\sqrt{q-1})^d||f||_2 = (\sqrt{q-1})^d
\]
Given that $||f^{-}||_2 \geq \epsilon$ and $f^{-}$ is supported on a set of measure at most $\frac{\epsilon}{B}$, for $q > 2$ we minimize $||f^{-}||_q$ (over all functions, not just polynomials) by setting $f^{-}$ equal to $-\sqrt{B}$ on a set of measure $\frac{\epsilon}{B}$ and setting $f^{-} = 0$ elsewhere. This implies that
\[
{\epsilon}B^{\frac{q}{2}-1} \leq ||f^{-}||^q_q < q^{\frac{dq}{2}}
\]
This gives a contradition when $B \geq \left(\frac{1}{\epsilon}q^{\frac{dq}{2}}\right)^{\frac{2}{q-2}}$. Taking $q = 4 + \ln{(\frac{1}{\epsilon})}$,
\[
\left(\frac{1}{\epsilon}q^{\frac{dq}{2}}\right)^{\frac{2}{q-2}} \leq \left(\frac{1}{\epsilon}\right)^{\frac{1}{1 + \ln{(\frac{1}{\epsilon})}}}\left(4 + \ln{\left(\frac{1}{\epsilon}\right)}\right)^{2d}
\leq e\left(4 + \ln{\left(\frac{1}{\epsilon}\right)}\right)^{2d}
\]
Thus, we have a contradiction as long as $B \geq e\left(4 + \ln{\left(\frac{1}{\epsilon}\right)}\right)^{2d}$.
\end{proof}
If we instead consider the distance from non-negative degree $d$ polynomials, it is no longer clear whether any degree $d$ $f$ which is far from being a non-negative degree $d$ polynomial must be negative on a constant proportion of the inputs. We leave this as a question for further research. 

\section{Algorithm for testing SOS}\label{sostestersection}
In this section, we describe a tester for property testing whether a polynomial $f$ of degree $2d$ is a sum of squares of norm at most $1$ or is far from being a degree $2d$ sum of squares of norm at most $1$. This tester is optimal over all testers which always accept YES instances. Thus, to prove our lower bound it is sufficient to show that this tester fails with high probability.
 
Given data $\{f(p_i) = v_i, i \in [1,m]\}$, we can try to test whether a polynomial $f$ of degree at most $2d$ is a sum of squares as follows.
\begin{definition}
Given a coefficient matrix $M$ with rows and columns indexed by multi-sets $I$ of size at most $d$, define $f_{M} = \sum_{J}{\left(\sum_{I,I': I \cup I' = J}{M_{II'}}\right)x_J}$.
\end{definition}
\begin{proposition}
A polynomial $f$ can be written as a sum of squares if and only there exists a coefficient matrix $M$ such that $f_M = f$ and $M \succeq 0$.
\end{proposition}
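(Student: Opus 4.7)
The plan is to exploit the standard correspondence between positive semidefinite matrices and sums of outer products, together with the simple monomial identity $x_I \cdot x_{I'} = x_{I \cup I'}$ (where the union is taken as a multi-set union, so that multiplicities add). Under this identity, the expression defining $f_M$ is exactly what one obtains by formally expanding a bilinear form $v^T M v$ evaluated on the vector $v$ whose $I$-entry is $x_I$, and then collecting terms by degree.

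For the forward direction, I would assume $f = \sum_k g_k^2$ with each $g_k$ of degree at most $d$, expand each $g_k = \sum_{|I| \leq d} (c_k)_I x_I$ in the monomial basis, and compute
\[
g_k^2 \;=\; \sum_{I,I'} (c_k)_I (c_k)_{I'} x_{I \cup I'} \;=\; \sum_J \Bigl(\sum_{I,I' : I \cup I' = J} (c_k)_I (c_k)_{I'}\Bigr) x_J.
\]
Then I would set $M = \sum_k c_k c_k^T$, which is manifestly PSD as a sum of rank-one PSD matrices, and verify by linearity that $f_M = \sum_k g_k^2 = f$.

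For the reverse direction, I would start with $M \succeq 0$ satisfying $f_M = f$, and use the fact that any PSD matrix admits a decomposition $M = \sum_k c_k c_k^T$ (via spectral decomposition, or equivalently Cholesky). Defining $g_k = \sum_I (c_k)_I x_I$, the same calculation run in reverse shows $f_M = \sum_k g_k^2$, so $f$ is a sum of squares of polynomials of degree at most $d$.

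There is no real obstacle here; the only thing to be careful about is the multi-set bookkeeping. Specifically, one must verify that the convention $x_I \cdot x_{I'} = x_{I \cup I'}$ used implicitly in the definition of $f_M$ agrees with ordinary polynomial multiplication (it does, because adding multiplicities of each index $k$ in $I$ and $I'$ corresponds to adding exponents in the product $x_k^{I_k} x_k^{I'_k} = x_k^{I_k + I'_k}$), and that the index set $\{I : |I| \leq d\}$ of $M$ is the right one to capture all polynomials of degree at most $d$ as vectors $c_k$.
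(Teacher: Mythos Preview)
Your proposal is correct; this is exactly the standard argument for the well-known equivalence between SOS representations and PSD Gram matrices. The paper itself does not supply a proof of this proposition at all --- it is stated as a known fact and used without justification --- so there is nothing to compare against beyond noting that your argument is the canonical one.
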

Thus, we can search for a coefficient matrix $M$ such that 
\begin{enumerate}
\item $\forall i \in [1,m], f_M(x_i) = v_i$
\item $M \succeq 0$
\end{enumerate}
If such a coefficient matrix $M$ is found then we output YES, otherwise we output NO. 

This algorithm outputs YES precisely when there is polynomial $f_M$ of degree at most $2d$ which is a sum of squares and matches the data. However, for all we know, $||f_M||$ could be very high. On the other hand, in multivariate polynomial interpolation, when the polynomial is underdetermined it is natural to minimize the norm of the polynomial. To take this into account, we instead consider the following property testing problem and algorithm: \newline
Assumption: One of the following cases holds:
\begin{enumerate}
\item $f$ has degree at most $2d$, $f$ is a sum of squares, and $||f|| \leq 1$.
\item For all $g$ such that $g$ has degree at most $2d$, $g$ is a sum of squares, and $||g|| \leq 1$, $||f-g|| > \epsilon$.
\end{enumerate}
Algorithm: Search for a coefficient matrix satisfying the following conditions:
\begin{enumerate}
\item $\forall i \in [1,m], f_M(x_i) = v_i$
\item $||f_M|| \leq 1$
\item $M \succeq 0$
\end{enumerate}
\begin{remark}
These conditions on $M$ are all convex, so this algorithm can be implemented with convex optimization.
\end{remark}
\begin{remark}
This algorithm is optimal if we require that the property tester always accept YES instances, as it says YES precisely when there is a function $f_M$ which is a sum of squares, matches the data, and has norm at most $1$.
\end{remark}
To prove our lower bound, it is necessary and sufficient to find a degree $2d$ polynomial $f$ of norm $1$ which is $(\epsilon)$-far from being a degree $2d$ sum of squares such that if we take $m$ randomly sampled points where $m \leq n^{\frac{d}{2} - \delta}$, this tester accepts $f$ with high probability.

\section{Norm Bounds for Multivariate Polynomial Interpolation}\label{interpolationsection}
In polynomial interpolation, we are given points $p_1,\cdots,p_m$ and values $v_1,\cdots,v_m$ and we want to find a polynomial $g$ of a given degree $d$ such that $\forall i, g(p_i) = v_i$. Single variable polynomial interpolation is very well understood; it can be achieved preciasely when $m \leq d+1$. However, multivariable polynomial interpolation is much less well understood. In this section, we consider the case when the $p_i$ are random. In this case, interpolation is almost surely possible as long as $m \leq \sum_{i=0}^{d}{\binom{n+i-1}{i}}$, where $n$ is the number of variables. However, this does not say anything about the norm of the resulting polynomial. In this section, we show that for all $\delta > 0$, if $m \leq n^{d - \delta}$ and $n$ is sufficiently large the we can find a $g$ which matches all the data and has small norm. Moreover, if $m \leq n^{\frac{d}{2} - \delta}$ then $||g^2||$ has small norm as well. More precisely, we show the following theorem.
\begin{definition}
We define $C$ to be the expected value of $\sum_{I:0 < |I| \leq d}{h_I(p)^2}$ for a random point $p$.
\end{definition}
\begin{theorem}\label{polynomialinterpolationtheorem}
For all $d$ and all $\delta > 0$, given points $p_1,\cdots,p_m \in \mathbb{R}^n$ randomly sampled from the multivariate normal distribution with covariance matrix $Id$, if $n$ is sufficiently large and $m \leq n^{d-\delta}$ then with very high probability, for all $v_1,\cdots,v_m$, there is a polynomial $g$ of degree $d$ such that 
\begin{enumerate}
\item $\forall i, g(p_i) = v_i$
\item $||g|| = (1 \pm o(1))\frac{||v||}{\sqrt{C}}$
\item If we further have that $m \leq n^{\frac{d}{2}-\delta}$ then $||g^2||$ is $\tilde{O}(\frac{||v||^4}{n^{2d}})$
\end{enumerate}
\end{theorem}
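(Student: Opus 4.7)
The plan is to pass to the Hermite basis and take the minimum $L^2$-norm interpolant. Write $g = \sum_{|I|\leq d} c_I h_I$ and let $H$ be the $m \times N$ evaluation matrix with $H_{jI} = h_I(p_j)$, where $N = \binom{n+d}{d}$. The interpolation constraint $g(p_i)=v_i$ becomes the linear system $Hc = v$, and by Parseval $\|g\|^2 = \|c\|^2$. Among all solutions, pick the least-squares choice $c = H^\top(HH^\top)^{-1}v$; this yields $\|g\|^2 = v^\top (HH^\top)^{-1} v$. The entire problem thus reduces to understanding the spectrum of the $m\times m$ Gram matrix $G := HH^\top$, whose entries are $G_{jk} = \sum_{|I|\leq d} h_I(p_j)h_I(p_k)$ --- the reproducing kernel $K_d(p_j,p_k)$ of polynomials of degree $\leq d$ evaluated at the pair of sample points.

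By orthonormality, $\mathbb{E}[G_{jj}] = C+1$ and, for $j \neq k$, $\mathbb{E}[G_{jk}] = 0$ while $\mathbb{E}[G_{jk}^2] = C+1$. So $\mathbb{E}[G] = (C+1)I_m$, a typical off-diagonal entry has size $O(\sqrt{C}) = \tilde O(n^{d/2})$, while a diagonal entry is $\Theta(n^d)$. The core estimate is $\|G - (C+1)I_m\|_{\mathrm{op}} = o(C)$ with very high probability when $m \leq n^{d-\delta}$. I would prove this by the trace-moment method: bound $\mathbb{E}\,\mathrm{tr}\bigl((G - (C+1)I)^{2q}\bigr)$ for $q$ growing slowly with $n$, expand as a sum over closed length-$2q$ walks in $[m]$, and control each factor $\mathbb{E}[K_d(p_j,p_k)^{2q}]$ using Gaussian hypercontractivity (each entry is a degree-$2d$ polynomial in independent Gaussians, so its $2q$-th moment is $\tilde O(C^q)$). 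This gives $(HH^\top)^{-1} = \tfrac{1}{C+1}(I_m + E)$ with $\|E\|_{\mathrm{op}} = o(1)$, hence $\|g\|^2 = (1\pm o(1))\|v\|^2/C$, proving part~2.

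For part~3, I would use the dual representation $g(x) = \sum_j \alpha_j K_d(x, p_j)$ with $\alpha = (HH^\top)^{-1}v$, so $\|\alpha\| = (1+o(1))\|v\|/C$. Then
\[
\|g^2\|^2 = \sum_{j_1,j_2,j_3,j_4} \alpha_{j_1}\alpha_{j_2}\alpha_{j_3}\alpha_{j_4}\,\bigl\langle K_d(\cdot,p_{j_1}) K_d(\cdot,p_{j_2}),\ K_d(\cdot,p_{j_3}) K_d(\cdot,p_{j_4})\bigr\rangle,
\]
and each inner product expands via the Hermite product formula and Parseval into an explicit polynomial in the pairwise kernel values $K_e(p_{j_a}, p_{j_b})$ for $e \leq 2d$. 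The ``paired'' contributions (where the four indices collapse into two pairs) dominate and give the leading $\|\alpha\|^4 \cdot C^2 \cdot m^2 = \tilde O(\|v\|^4/n^{2d})$; the unpaired contributions are suppressed by a factor of $\sqrt{C}/C = n^{-d/2}$ per unpaired edge, and summing them remains lower order provided $m \leq n^{d/2 - \delta}$. Gaussian hypercontractivity $\|g^2\|_2 = \|g\|_4^2 \leq 3^d\|g\|_2^2$ provides a coarser alternative bound that can be combined with part~2 wherever it is tight enough.

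The main obstacle is the concentration step: pushing $\|G - (C+1)I\|_{\mathrm{op}} = o(C)$ all the way out to $m = n^{d-\delta}$. A naive Frobenius bound only reaches $m \lesssim \sqrt{C} = n^{d/2}$, since $\|G - \mathbb{E}G\|_F^2 \approx m^2 \cdot C$; the full range requires either matrix Bernstein with careful variance bookkeeping or the trace-moment argument above, where tracking how products of low-degree Hermite polynomials concentrate via their higher moments is the principal technical work.
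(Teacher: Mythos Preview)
Your overall architecture matches the paper exactly: the minimum-norm interpolant in the Hermite basis, the reduction to the $m\times m$ Gram matrix $G=HH^\top$ (the paper's $CM$), and a trace-moment bound on $G-(C+1)I$. The paper's functions $g_i = C^{-1}\sum_I h_I(p_i)h_I$ are precisely your reproducing kernels $K_d(\cdot,p_i)/C$, so the two constructions coincide.

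Where your sketch is too loose is the trace-moment step itself. Writing ``control each factor $\mathbb{E}[K_d(p_j,p_k)^{2q}]$ via hypercontractivity'' is not what happens in the expansion of $\mathrm{tr}\,(G-(C+1)I)^{2q}$: the walk factors $\prod_t G_{j_t j_{t+1}}$ share the points $p_{j_t}$, so they are correlated, and a H\"older/hypercontractivity bound on the individual entries only recovers the Frobenius range $m\lesssim n^{d/2}$ that you yourself flag. To reach $m\le n^{d-\delta}$ the paper opens each $G_{ij}$ as $\sum_I h_I(p_i)h_I(p_j)$, sums over both the point-indices $i_1,\dots,i_q$ and the Hermite multi-indices $I_1,\dots,I_q$, and classifies the nonvanishing terms by an \emph{intersection pattern}. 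The combinatorial heart is the inequality $y + x/d \le q+1$, where $y$ is the number of distinct points and $x$ the number of distinct coordinates used; this is proved by a small graph-theoretic argument (contracting multiplicity-one edges, then a loop/non-loop edge count). That trade-off is exactly what lets $m^y n^x$ stay bounded by $m\,n^{dq}$ over the whole range $m\le n^{d-\delta}$. Your proposal identifies that ``the full range requires \dots\ the trace-moment argument'', but the missing idea is this $(x,y)$ accounting, not a sharper moment bound on single entries.

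For part~3 your final remark is in fact cleaner than what the paper does. Gaussian hypercontractivity gives $\|g^2\|_2 = \|g\|_4^2 \le 3^{\,d}\|g\|_2^2$, so once part~2 is in hand you get $\|g^2\|^2 \le 9^{\,d}(1+o(1))\|v\|^4/C^2 = \tilde O(\|v\|^4/n^{2d})$ immediately, and this already holds for all $m\le n^{d-\delta}$. The paper instead expands $\|g^2\|^2$ through a linearization matrix $Q$, splits $Q(H\otimes H)$ into a diagonal block $A$ and an off-diagonal block $R$, and runs a second trace-moment argument on $R$; that route is what forces the extra restriction $m\le n^{d/2-\delta}$ (the paper even remarks that this should be removable). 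So here your approach genuinely buys something: it is shorter and yields a stronger conclusion.
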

This theorem shows that our tester will accept with high probability as long as $m \leq n^{\frac{d}{2} - \delta}$ and all our sampled points have non-negative values. To see this, note that given data $f(p_i) = v_i$, this theorem says that with high probability there is a $g$ such that $g(p_i) = \sqrt{v_i}$ and $||g^2||$ has small norm. Thus, $g^2$ matches the data and has small norm so the tester must accept.
\subsection{Construction of the function $g$}
To construct our function $g$, we use the following strategy:
\begin{enumerate}
\item We construct a function $g_i$ of degree $d$ for each point $p_i$.
\item We take the matrix $M$ where $M_{ij} = g_j(p_i)$.
\item We take $x$ to be a solution to $Mx = v$.
\item We take $g = \sum_{j}{{x_j}g_j}$.
\end{enumerate}
\begin{proposition}
For all $i$, $g(p_i) = v_i$.
\end{proposition}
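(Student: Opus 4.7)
The plan is to observe that the proposition is a direct unwinding of the definitions in the construction, with the matrix $M$ essentially playing the role of an evaluation operator applied to the basis $\{g_1,\dots,g_m\}$. Concretely, I would simply substitute $p_i$ into the expression $g = \sum_j x_j g_j$ and recognize the resulting sum as the $i$th entry of $Mx$.

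More precisely, the first step is to write $g(p_i) = \sum_j x_j g_j(p_i)$ using linearity of evaluation at a point. The second step is to recall that by definition of $M$, the entry $M_{ij}$ is exactly $g_j(p_i)$, so that $\sum_j x_j g_j(p_i) = \sum_j M_{ij} x_j = (Mx)_i$. The third step is to invoke the defining property of $x$, namely $Mx = v$, to conclude that $(Mx)_i = v_i$, giving $g(p_i) = v_i$ as required.

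There is no real obstacle here: the content of the proposition lies entirely in the bookkeeping of the construction, not in any nontrivial estimate. The genuinely hard work is deferred to subsequent statements, which must (i) exhibit specific functions $g_i$ for which the matrix $M$ is invertible (or at least admits a solution $x$ to $Mx = v$) with high probability over the sample points $p_1,\dots,p_m$, and (ii) control $\|g\|$ and $\|g^2\|$ in terms of $\|v\|$ in accordance with Theorem \ref{polynomialinterpolationtheorem}. For the present proposition, one only needs the tautological observation above, so I would keep the proof to a single displayed chain of equalities with a brief pointer back to the construction.
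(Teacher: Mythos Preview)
Your proposal is correct and matches the paper's proof essentially verbatim: the paper simply writes $g(p_i) = \sum_j x_j g_j(p_i) = \sum_j M_{ij} x_j = v_i$. There is nothing to add or change.
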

\begin{proof}
For all $i$, $g(p_i) = \sum_{j}{{x_j}g_j(p_i)} = \sum_{j}{M_{ij}x_j} = v_i$.
\end{proof}
We now construct the functions $g_i$ of degree $d$ for each point $p_i$. These functions are constructed so that with high probability, for all $i$, $g_i(p_i) \approx 1$ and for all $i \neq j$, $|g_i(p_j)|$ is small. 
\begin{definition}
Given a point $p_i = (v_1,\cdots,v_n)$, we define $g_i = \frac{\sum_{I:0 < |I| \leq d}{h_I(p_i)h_I}}{C}$
\end{definition}
\subsection{Analysis of the function $g$}
To analyze the function $g$, it is useful to consider the following matrix $H$ which is closely related to $M$.
\begin{definition}
We define $H$ to be the matrix with rows indexed by $I$ where $0 < |I| \leq d$, columns indexed by $i$, and entries $H_{Ii} = h_I(p_i)$.
\end{definition}
\begin{lemma}
$M = \frac{{H^T}H}{C}$
\end{lemma}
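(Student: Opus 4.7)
The plan is to verify the identity by unwinding definitions on both sides. The left-hand entry $M_{ij}$ is defined as $g_j(p_i)$, and by the definition of $g_j$ we have
\[
M_{ij} = g_j(p_i) = \frac{1}{C}\sum_{I:0 < |I| \leq d} h_I(p_j)\, h_I(p_i).
\]
On the right-hand side, the matrix $H$ has rows indexed by the multi-sets $I$ with $0 < |I| \leq d$ and columns indexed by the sample indices, so $(H^T H)_{ij} = \sum_{I} H_{Ii} H_{Ij} = \sum_{I:0<|I|\leq d} h_I(p_i) h_I(p_j)$.

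Comparing the two expressions, each entry of $M$ is exactly $\frac{1}{C}$ times the corresponding entry of $H^T H$, so $M = \frac{H^T H}{C}$ as claimed. Since multiplication of the Hermite polynomial values commutes, the symmetry in $i,j$ is automatic and no reindexing is needed.

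There is essentially no obstacle here; the lemma is a purely notational statement identifying the Gram-type matrix built from the auxiliary interpolants $g_j$ with a rescaled product of the sample/Hermite evaluation matrix $H$ with its transpose. The only thing worth flagging is that the index set in both sums is the same set of multi-sets $I$ with $0 < |I| \leq d$ (so the constant term is excluded on both sides), which is consistent with the definitions of $g_i$ and of $H$ given just above the lemma. This identification is what makes the subsequent analysis tractable, because standard results about the spectrum of $H^T H$ for random $H$ can now be transferred directly to $M$, and in turn to the solution $x = M^{-1} v$ used to build $g$.
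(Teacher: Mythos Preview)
Your proof is correct and is essentially identical to the paper's own argument: both compute the $(i,j)$ entry on each side by unwinding the definitions of $M_{ij}=g_j(p_i)$ and $(H^TH)_{ij}=\sum_{I:0<|I|\le d} h_I(p_i)h_I(p_j)$ and observe they agree up to the factor $1/C$.
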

\begin{proof}
Observe that 
\[
\frac{1}{C}({H^T}H)_{ij} = \frac{1}{C}\sum_{I:0 <|I| \leq d}{(h_I(p_i)h_I)(p_j)} = \left(\frac{\sum_{I:0<|I| \leq d}{h_I(p_j)h_I}}{C}\right)(p_i) = g_j(p_i) = M_{ij}
\]
\end{proof}
\begin{proposition}
$g = \frac{1}{C}\sum_{I}{{(Hv)_I}h_I}$
\end{proposition}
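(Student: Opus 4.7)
The plan is to prove the identity by a direct substitution followed by a linear-algebra rewriting. First, I would plug the explicit formula $g_j = \frac{1}{C}\sum_{I:\,0<|I|\leq d} h_I(p_j)\,h_I$ into the definition $g = \sum_j x_j g_j$ and interchange the order of the two finite sums, collecting the coefficient of each Hermite basis element $h_I$. This immediately produces
\[
g \;=\; \frac{1}{C}\sum_{I:\,0<|I|\leq d}\Bigl(\,\sum_{j=1}^{m} h_I(p_j)\,x_j\,\Bigr)\, h_I,
\]
and by the definition $H_{Ii}=h_I(p_i)$ the inner sum is precisely the $I$-th entry of the matrix--vector product $Hx$.

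The second step is to rewrite these Hermite coefficients in terms of $v$ rather than $x$, so as to match the claimed form $(Hv)_I$. Here I would invoke the defining equation $Mx = v$ together with the factorization $M = H^T H/C$ established in the preceding lemma, which rearranges to $H^T(Hx) = C v$. This is the algebraic identity I plan to exploit in order to pass from the natural coefficients $(Hx)_I$ produced by the substitution to the coefficients $(Hv)_I$ asserted in the proposition, interpreting the pair $(Hx, Hv)$ as vectors indexed by multi-sets $I$ with $0 < |I| \leq d$ and using the $H^T$-action to identify them up to the relevant kernel.

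The exchange of summations in the first step is purely mechanical. The main obstacle is the second step: a priori $Hx$ and $Hv$ are distinct vectors of length equal to the number of multi-sets $I$ with $0<|I|\leq d$, and $x$ is determined only implicitly through the linear system $Mx=v$, being linked to $v$ only through $v=\tfrac{1}{C}H^T(Hx)$. I expect most of the write-up to go into making this identification precise, after which the stated formula for $g$ falls out immediately from the substitution computation above.
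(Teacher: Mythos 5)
Your first step is exactly the intended argument, and it already finishes the proof of what the proposition should say: substituting $g_j = \frac{1}{C}\sum_{I:0<|I|\leq d} h_I(p_j)\, h_I$ into $g = \sum_j x_j g_j$ and exchanging the sums gives $g = \frac{1}{C}\sum_I (Hx)_I h_I$. That the correct coefficient is $(Hx)_I$ rather than $(Hv)_I$ is confirmed by the display immediately following the proposition, which computes $||g||^2 = \frac{1}{C^2}\sum_I \bigl(\sum_{j} x_j H_{Ij}\bigr)^2 = \frac{1}{C^2}x^T H^T H x$; the $v$ in the statement is evidently a typo for $x$.

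The second step of your plan is a genuine gap if pursued literally: you cannot pass from $(Hx)_I$ to $(Hv)_I$. The identity $H^T(Hx) = Cv$ that you extract from $Mx = v$ and $M = H^TH/C$ is an equation in $\mathbb{R}^m$ (it is $H^T$ applied to the $I$-indexed vector $Hx$), and it says nothing about whether $Hx$ and $Hv$ agree as $I$-indexed vectors. For that you would need $x - v \in \ker H$; but $x = M^{-1}v$, so $x - v = (M^{-1}-\mathrm{Id})v$, and $H(M^{-1}-\mathrm{Id})v$ is nonzero in general, since $M$ is only shown to be close to the identity, not equal to it. So the ``identification up to the relevant kernel'' cannot be made precise, and no amount of careful write-up will close that step. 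The right resolution is to stop after your first computation and record the proposition with $x$ in place of $v$ (equivalently, $g = \frac{1}{C}\sum_I (HM^{-1}v)_I h_I$), which is all the subsequent analysis uses.
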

We now have that 
\[
||g||^2 = \frac{1}{C^2}\sum_{I}{\left(\sum_{j=1}^{m}{{x_j}H_{Ij}}\right)^2} = \frac{1}{C^2}{x^T{H^TH}x} = \frac{1}{C}{x^T}Mx = \frac{1}{C}{v^T}M^{-1}v
\]
In the next subsection, we will show that with high probability $M$ is very close to the identity which immediately implies that with high probability,  $||g|| = (1 \pm o(1))\frac{||v||}{\sqrt{C}}$, as needed.
\subsection{Analysis of $H$ and $M$}
In this subsection, we analyze the matrices $H$ and $M$. We begin by analyzing $H$ in order to develop the necessary techniques.
\begin{theorem}\label{Hnormboundtheorem}
For all $\delta > 0$ and all $d$, if $n$ is sufficiently large and $m \leq n^{d - \epsilon}$ then with high probability, $||H||$ is $\tilde{O}(n^{\frac{d}{2}})$
\end{theorem}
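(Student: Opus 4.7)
The plan is to bound $\|H\|$ via the moment method applied to the positive semidefinite matrix $HH^T$. For any positive integer $k$, $\|H\|^{2k} = \|HH^T\|^k \le \mathrm{tr}((HH^T)^k)$, so it is enough to show that $\mathbb{E}[\mathrm{tr}((HH^T)^k)] = \tilde{O}(n^{dk})$ for $k$ a sufficiently large constant depending on $d$ and $\delta$. Markov's inequality then yields $\|H\| = \tilde{O}(n^{d/2})$ with high probability, and increasing $k$ further pushes the failure probability below any fixed inverse polynomial in $n$.

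Expanding the trace, $\mathrm{tr}((HH^T)^k) = \sum_{i_1,\ldots,i_k}\sum_{I_1,\ldots,I_k}\prod_{t=1}^{k} h_{I_t}(p_{i_t})\,h_{I_{t+1}}(p_{i_t})$ with cyclic indices. Since the sample points $p_i$ are independent, the expectation factors across the distinct values appearing in $i_1,\ldots,i_k$: for each distinct value $v$ that occurs in the sequence, one collects all Hermite factors with argument $p_v$ and evaluates a single Gaussian expectation of their product. Accordingly, I stratify the sum by the partition $\pi$ of $\{1,\ldots,k\}$ recording which indices $i_t$ coincide.

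In the generic stratum where all $i_t$ are distinct, each point $p_{i_t}$ appears only in the two factors $h_{I_t}(p_{i_t})\,h_{I_{t+1}}(p_{i_t})$, whose expectation equals $\delta_{I_t, I_{t+1}}$ by orthonormality of the Hermite basis. This forces $I_1 = I_2 = \cdots = I_k$, so the stratum contributes at most $N \cdot m^k$, where $N = |\{I : 0 < |I| \le d\}| = O(n^d)$. Since $m \le n^{d-\delta}$, this is at most $O(n^{d + (d-\delta)k})$, which is $\le n^{dk}$ for $k \ge d/\delta$.

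The main obstacle will be bounding the non-generic strata. When several $i_t$'s coincide, the corresponding point $p_v$ appears in four or more Hermite factors, and the Gaussian expectation $\mathbb{E}[h_{J_1}(p) \cdots h_{J_r}(p)]$ is nonzero for a much richer family of multiset tuples, so the constraints on the $I_t$'s loosen substantially. I would bound each such expectation using hypercontractivity for degree $d$ Gaussian polynomials (so $\|h_I\|_q = O((q-1)^{d/2})$) combined with H\"older's inequality, and then verify partition by partition that each coincidence among the column indices (which costs a factor of $m$ in the count of tuples) releases only a bounded amount of new freedom in the row indices, so that every stratum's contribution is still $\tilde{O}(n^{dk})$. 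Because $k$ is a constant, only $O(1)$ partitions must be union-bounded over, concluding the argument.
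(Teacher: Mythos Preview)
Your overall strategy---bound $\mathbb{E}[\mathrm{tr}((HH^T)^k)]$ and apply Markov---is exactly the paper's, and your treatment of the generic stratum (all $i_t$ distinct, forcing $I_1=\cdots=I_k$) is correct. The gap is entirely in the non-generic strata, which you dispatch in one sentence but which constitute essentially all of the paper's work.

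First, the phrase ``each coincidence \ldots\ releases only a bounded amount of new freedom in the row indices'' is not what happens and cannot be what you prove. Look at the fully collapsed stratum, all $i_t$ equal to a single $i$: the product becomes $\prod_t h_{I_t}(p_i)^2$, whose expectation is positive for \emph{every} tuple $(I_1,\ldots,I_k)$, so the row freedom jumps from $O(n^d)$ in the generic case to $O(n^{dk})$ here. Each merge can unlock a factor of $n^d$, not $O(1)$. The correct statement is a tradeoff: each merge removes a factor of $m$ from the column count while adding at most a factor of $n^d$ to the row count. The dominant stratum is in fact the fully collapsed one, contributing on the order of $m\,n^{dk}$, and the extra $m$ is absorbed only after taking the $2k$-th root with $k$ growing like $\log n$ (which is what the paper does). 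Your target bound $\mathbb{E}[\mathrm{tr}((HH^T)^k)]=\tilde O(n^{dk})$ with constant $k$ is therefore off by a factor of $m$.

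Second, establishing this tradeoff for intermediate strata is the real content of the proof, and hypercontractivity does not touch it: your H\"older/hypercontractivity bound controls the \emph{size} of each nonzero expectation (the paper uses a cruder elementary bound for the same purpose), but says nothing about the \emph{number} of tuples $(I_1,\ldots,I_k)$ with nonzero expectation. To count these the paper stratifies more finely than you do---not just by the partition of the $i_t$'s but by the full ``intersection pattern'' recording also which coordinates of the $I_j$ coincide---and proves that for any pattern with nonzero expectation one has $y + x/d \le k+1$, where $y$ is the number of distinct column indices and $x$ the number of distinct coordinates appearing among the $I_j$. The proof encodes the Hermite parity constraints as a minimum-degree condition on subgraphs of the multigraph with vertex set $\{i_t\}$ and edges $\{i_{t-1},i_t\}$, and then establishes a purely graph-theoretic inequality bounding a fractional packing quantity $w(G)$ by $|E(G)|-|V(G)|+1$. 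This graph argument is the step your proposal is missing.
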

\begin{proof}
We can use the trace power method to probabilistically bound $||H||$. For this, we need to bound 
\[
E\left[tr((HH^T)^q)\right] = \sum_{i_1,\cdots,i_q,I_1,\cdots,I_q: \forall j, 0 < |I_j| \leq d}{E\left[\prod_{j=1}^{q}{H_{I_j{i_j}}H_{I_{j+1}{i_j}}}\right]}
\]
where we take $I_{q+1} = I_1$ and $i_{q+1} = i_1$. We partition this sum based on the intersection pattern $P$ of which of the $i_1,\cdots,i_q$ are equal to each other and how $I_1,\cdots,I_q$ interact with each other. We then analyze which intersection patterns give terms with nonzero expected value.
\begin{definition}
We define an intersection pattern $P$ to be the following data:
\begin{enumerate}
\item For all $j' \neq j$, $P$ has the equality $i_{j'} = i_j$ or the inequality $i_{j'} \neq i_j$
\item For all $j,j',t,t'$, $P$ has the equality $I_{j'}(t') = I_j(t)$ or the inequality $I_{j'}(t') \neq I_j(t)$
\end{enumerate}
where these equalities and inequalities are consistent with each other (i.e. transitivity is satisfied for the equalities).
\end{definition}
\begin{lemma}\label{countingintersectionpatterns}
There are at most $(4dq)^{4dq}$ possible intersection patterns.
\end{lemma}
\begin{proof}
Choose an arbitrary ordering of the $i_j$ and an arbitrary ordering of the $I_j(t)$. To specify an intersection pattern, it suffices to specify which $i_j$ and $I_j(t)$ are equal to previous $i_j$ and $I_j(t)$ and if so, to specify one of the equalities which hold. The total number of choices is at most $(4dq)^{4dq}$.
\end{proof}
\begin{lemma}\label{xylemma}
For any intersection pattern $P$ which gives a nonzero expected value, letting $x = |\{k:\exists j: (I_{j})_k > 0\}|$ and letting $y$ be the number of distinct $i_j$, $y + \frac{x}{d} \leq q+1$
\end{lemma}
\begin{proof}
The key observation is that if we consider the multiset 
\[
\{(i_j,k): (I_j)_k > 0\} \cup \{(i_j,k): (I_{j+1})_k > 0\} = \{(i_j,k): (I_j)_k > 0\} \cup \{(i_{j-1},k): (I_j)_k > 0\},
\]
if any element of this multiset appears exactly once then $E\left[\prod_{j=1}^{q}{H_{I_j{i_j}}H_{I_{j+1}{i_j}}}\right] = 0$ over the random choices for the points $\{p_i\}$.

With this observation in mind, for each $k$, consider the graph fomed by the edges $\{(i_{j-1},i_j): (I_j)_k > 0\}$. In a term with nonzero expectation, for all $k$, every vertex in this graph with nonzero degree must have degree at least 2 (where we consider loops as adding 2 to the degree). Thus, these graphs must consist of loops/cycles and loops/cycles joined by paths. This implies the following upper bound on $x$
\begin{definition}
Let $G_y$ be the multi-graph consisting of the $q$ edges $\{i_{j-1},i_j\}$
\end{definition}
\begin{definition}
Given a multi-graph $G$, we define $w(G)$ to be the maximum number such that $\exists G_1,\cdots,G_t$ and $w_1,\cdots,w_t$ satisfying the following conditions
\begin{enumerate}
\item $w(G) =\sum_{i}{w_i}$ 
\item $\forall i, V(G_i) = V(G), E(G_i) \subseteq E(G)$, $E(G_i)$ is nonempty, and no vertex of $G_i$ has degree exactly $1$ (where we consider loops as adding 2 to the degree).
\item $\forall i, w_i \geq 0$
\item $\forall j, \sum_{i: (i_{j-1},i_j) \in E(G_i)}{w_i} \leq 1$
\end{enumerate}
\end{definition}
\begin{lemma}\label{weightupperboundlemma}
For any intersection pattern which gives a nonzero expected value, $x = |\{k:\exists j: (I_{j})_k > 0\}| \leq d \cdot w(G_y)$
\end{lemma}
\begin{proof}
For each $k:\exists j: (I_{j})_k > 0$, we construct the graph $G_k$ where $V(G_k) = V(G_y),E(G_k) = \{(i_{j-1},i_j): (I_{j})_k > 0\}$ and assign it weight $\frac{1}{d}$. From the above observation, no vertex of $G_k$ can have degree exactly $1$ (where we consider loops as adding 2 to the degree). Also, we have that the total weight on any edge $(i_{j-1},i_j)$ is at most $1$ as at most $d$ graphs $G_k$ contribute to it and each contribution is $\frac{1}{d}$. Thus, 
$\sum_{k:\exists j: (I_{j})_k > 0}{\frac{1}{d}} = \frac{x}{d} \leq w(G_y)$, as needed.
\end{proof}
With this bound in mind, we now prove the following lemma which will immediately imply our result.
\begin{lemma}\label{weightvertextradeofflemma}
For all connected multi-graphs $G$, $w(G) + |V(G)| \leq |E(G)|+1$
\end{lemma}
\begin{proof}
We first reduce to the case where every non-loop edge of $G$ has multiplicity at least two with the following lemma.
\begin{lemma}
If $G$ is a multigraph which has a non-loop edge $e$ that appears with multiplicity $1$ and $G'$ is the graph formed by contracting this edge then $w(G) \leq w(G')$
\end{lemma}
\begin{proof}
Observe that if subgraphs $G_1,\cdots,G_t$ of $G$ all have no vertex of degree exactly $1$, then letting $G'_1,\cdots,G'_t$ be the graphs $G_1,\cdots,G_t$ formed by making the two endpoints of $e$ equal and removing $e$ (if present), $G'_1,\cdots,G'_t$ are subgraphs of $G'$ and have no vertices of degree exactly $1$. To see this, note that for any vertex $v'$ in $G'_i$ except for the vertex formed by making the two endpoints of $e$ equal, the number of edges incident to $v'$ is unaffected. For the $v'$ formed by making the two endpoints of $e$ equal, each of these endpoints must have had an edge besides $e$ incident with it, so the degree of this $v'$ is at least 2.
\end{proof}
Using this lemma, if $G$ has a non-loop edge $e$ which appears with multiplicity $1$, $G'$ is the graph formed by contracting this edge, and $w(G') + |V(G')| \leq |E(G')|+1$ then 
\[
w(G) + |V(G)| \leq w(G') + |V(G')| + 1 \leq |E(G')|+2 = |E(G)| + 1
\]
Thus, it is sufficient to prove the lemma for $G'$. Applying this logic repeatedly, it is sufficient to prove the result for the case where every non-loop edge of $G$ has multiplicity at least two.
\begin{definition}
We define $E_{loop}(G)$ to be the multi-set of loops in $G$ and we define $E_{nonloop}(G)$ to be $E(G) \setminus E_{loop}(G)$.
\end{definition}
\begin{lemma}
For all $G$, $w(G) \leq |E_{loop}(G)| + \frac{|E_{nonloop}(G)|}{2}$
\end{lemma}
\begin{proof}
Let $G_1,\cdots,G_k$ and $w_1,\cdots,w_k$ be graphs and weights such that 
\begin{enumerate}
\item $w(G) =\sum_{i=1}^{k}{w_i}$
\item $\forall i, V(G_i) = V(G), E(G_i) \subseteq E(G)$, $E(G_i)$ is nonempty, and no vertex of $G_i$ has degree exactly $1$ (where we consider loops as adding 2 to the degree).
\item $\forall i, w_i \geq 0$
\item $\forall j, \sum_{i \in [1,k]: (i_{j-1},i_j) \in E(G_i)}{w_i} \leq 1$
\end{enumerate}
Observe that each $G_i$ must either have at least one loop or at least two non-loop edges. Thus, 
\begin{align*}
|E_{loop}(G)| + \frac{|E_{nonloop}(G)|}{2} &\geq \sum_{j:(i_{j-1},i_j) \in E_{loop}(G)}{\sum_{i:(i_{j-1},i_j) \in E(G_i)}{w_i}} + 
\frac{1}{2}\sum_{j:(i_{j-1},i_j) \in E_{nonloop}(G)}{\sum_{i:(i_{j-1},i_j) \in E(G_i)}{w_i}} \\
&= \sum_{i}{\left(\sum_{j:(i_{j-1},i_j) \in E(G_i) \cap E_{loop}(G)}{w_i} + \frac{1}{2}\sum_{j:(i_{j-1},i_j) \in E(G_i) \cap E_{nonloop}(G)}{w_i}\right)} \\
&\geq \sum_{i}{w_i} = w(G)
\end{align*}
as needed.
\end{proof}
\begin{lemma}
If $G$ is connected and every non-loop edge of $G$ has multiplicity at least 2 then $|V(G)| \leq \frac{|E_{nonloop}(G)|}{2} + 1$
\end{lemma}
\begin{proof}
Imagine building up $G$ from one isolated vertex. We can add loops for free, but every time we add a neighbor of an existing vertex, we must add at least two edges (as all edges have multiplicity at least two).
\end{proof}
Putting these lemmas together, $|V(G)| + w(G) \leq |E_{loop}(G)| + |E_{nonloop}(G)| + 1 = |E(G)| + 1$
which proves Lemma \ref{weightvertextradeofflemma}.
\end{proof}
Putting Lemmas \ref{weightupperboundlemma} and \ref{weightvertextradeofflemma} together, Lemma \ref{xylemma} follows immediately. We have that  
$y + \frac{x}{d} \leq |V(G)| + w(G) \leq |E(G)| + 1 = q+1$, as needed.
\end{proof}
We now consider the expression 
\[
E\left[tr((HH^T)^q)\right] = \sum_{i_1,\cdots,i_q,I_1,\cdots,I_q: \forall j, 0 < deg(I_j) \leq d}{E\left[\prod_{j=1}^{q}{H_{I_j{i_j}}H_{I_{j+1}{i_j}}}\right]}
\]
Lemma \ref{xylemma} implies that for any intersection pattern which gives a nonzero expected value, there are at most $\max_{x,y:y\geq 1,y+\frac{x}{d} \leq q+1}{\{{m^y}n^x\}} \leq mn^{dq}$ choices for $i_1,\cdots,i_q,I_1,\cdots,I_q$. By Lemma \ref{countingintersectionpatterns}, there are at most $(4dq)^{4dq}$ possible intersection patterns. To complete our upper bound, we just need to show a bound on $E\left[\prod_{j=1}^{q}{H_{I_j{i_j}}H_{I_{j+1}{i_j}}}\right]$ for a particular $i_1,\cdots,i_q,I_1,\cdots,I_q$, which we do with the following lemma
\begin{lemma}
For any $i_1,\cdots,i_q,I_1,\cdots,I_q$,
\[
E\left[\prod_{j=1}^{q}{H_{I_j{i_j}}H_{I_{j+1}{i_j}}}\right] \leq (4dq)^{4dq}
\] 
\end{lemma}
\begin{proof}
Observe that for any $i_1,\cdots,i_q,I_1,\cdots,I_q$, $E\left[\prod_{j=1}^{q}{H_{I_j{i_j}}H_{I_{j+1}{i_j}}}\right]$ is a product of expressions of the form 
$E\left[\left(\prod_{i=1}^{k}{h_{j_i}}\right)(x)\right]$ for some $j_1,\cdots,j_k$. For each such expression we have the following bound.
\begin{lemma}
Letting $d' = \sum_{i=1}^{k}{j_i}$, $E\left[\left(\prod_{i=1}^{k}{h_{j_i}}\right)(x)\right] \leq (d')^{2d'}$
\end{lemma}
\begin{proof}
We use the fact that for all $j \geq 1$, the sum of the absolute values of the coefficients of $h_j$ is at most $j^j$. This implies that the sum of the absolute values of the coefficients of $\prod_{i=1}^{k}{h_{j_i}}$ is at most ${d'}^{d'}$. Over a normal distribution $E[x^{p}] = \prod_{i=1}^{\frac{p}{2}}{(2i-1)} \leq p^p$ if $p$ is even and is $0$ if $p$ is odd, which implies the result.
\end{proof}
The total sum of all the degrees is at most $2dq$ as this is the maximum number of pairs $I_j(t),i_j$ and $I_{j+1}(t),i_j$. Thus, the product over all of the expressions which we have is at most $(4dq)^{4dq}$, as needed.
\end{proof}
Putting everything together, 
\[
E\left[tr((HH^T)^q)\right] = \sum_{i_1,\cdots,i_q,I_1,\cdots,I_q: \forall j, 0 < deg(I_j) \leq \frac{d}{2}}{E\left[\prod_{j=1}^{q}{H_{I_j{i_j}}H_{I_{j+1}{i_j}}}\right]} 
\leq (4dq)^{8dq}mn^{dq}
\]
We now apply Markov's inequality. For all $q$ and all $\beta \geq 0$, 
\begin{align*}
Pr\left[||H|| \geq \sqrt[2q]{n^{\beta}E\left[tr((HH^T)^q)\right]}\right] &= Pr\left[||H||^{2q} \geq n^{\beta}E\left[tr((HH^T)^q)\right]\right] \\
& \geq Pr\left[tr((HH^T)^q) \geq n^{\beta}E\left[tr((HH^T)^q)\right]\right] \leq \frac{1}{n^{\beta}}
\end{align*}
Applying this with $q \sim dq\beta\log{n}$, Theorem \ref{Hnormboundtheorem} follows.
\end{proof}
With the techniques we developed to prove Theorem \ref{Hnormboundtheorem}, we can now anaylze $M$.
\begin{theorem}\label{Manalysistheorem}
For all $d$ and all $\delta > 0$, for sufficiently large $n$, if $m \leq n^{d - \delta}$ and we write $M = Id + M'$ then with high probability $||M'|| << 1$.
\end{theorem}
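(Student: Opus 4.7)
The plan is to apply the trace power method to $M' := M - Id$. Writing $M' = \frac{1}{C}(H^T H - C\cdot Id)$, we have
\[
\|M'\|^{2q} \leq \mathrm{tr}\bigl((M')^{2q}\bigr) = \frac{1}{C^{2q}}\,\mathrm{tr}\bigl((H^T H - C\cdot Id)^{2q}\bigr),
\]
so it suffices to bound $E\bigl[\mathrm{tr}((H^T H - C\cdot Id)^{2q})\bigr]$ and then apply Markov's inequality. The new ingredient compared to the proof of Theorem \ref{Hnormboundtheorem} is that every entry of $H^T H - C\cdot Id$ has zero expectation: $(H^T H - C\cdot Id)_{ab} = \sum_{I:\,0<|I|\leq d}\bigl(h_I(p_a)h_I(p_b)-\delta_{ab}\bigr)$, and $E[h_I(p_a)h_I(p_b)] = \delta_{ab}$ whenever $|I|>0$ by orthonormality of the Hermite basis.

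Expanding the trace produces a sum over closed walks of length $2q$ in $[m]$ with multiset labels $I_1,\ldots,I_{2q}$, which I would classify by intersection patterns exactly as in the proof of Theorem \ref{Hnormboundtheorem}. Lemmas \ref{countingintersectionpatterns}, \ref{xylemma}, and \ref{weightvertextradeofflemma}, together with the per-factor polynomial-moment bound, carry over unchanged. What the centering adds is that in the proof of Theorem \ref{Hnormboundtheorem} the dominant contribution came from intersection patterns with $y=1$ (all $i_j$ equal) in which the $I_j$ labels effectively contributed their mean $E[h_{I_j}^2]=1$ at each step, producing a term of order $mC^{2q}$. This is exactly the trace $\mathrm{tr}((C\cdot Id)^{2q}) = mC^{2q}$, which is subtracted off when we expand $(H^T H - C\cdot Id)^{2q}$. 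After cancellation, the surviving terms either have $y\geq 2$ or involve a genuinely non-mean matching of the $I_j$'s, and by the same walk-counting they are much smaller.

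Combining the combinatorial count with the per-pattern bound gives a bound of the form $E\bigl[\mathrm{tr}((H^T H - C\cdot Id)^{2q})\bigr] \leq \tilde O\bigl(m^{q+1}\,C^q\bigr)$ (matching the Wigner-type behavior one would expect for a symmetric random matrix with entries of variance $\Theta(C)$), so that $E[\mathrm{tr}((M')^{2q})] \leq \tilde O(m^{q+1}/C^q)$. With $C = \Theta(n^d)$ and $m \leq n^{d-\delta}$, choosing $q = \Theta(\log n)$ and applying Markov's inequality gives $\|M'\| \leq \tilde O\bigl(\sqrt{m/C}\bigr) = \tilde O(n^{-\delta/2})$ with high probability, which is $\ll 1$ for $n$ sufficiently large.

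The main obstacle will be the combinatorial bookkeeping for this cancellation: verifying that the identity subtraction exactly removes the dominant $y=1$, mean-matched walk contributions, and that the ``mixed'' walks, in which some $I_j$'s are averaged to their mean while others are not, are no larger than the $y\geq 2$ terms. Once this cancellation is controlled, the remainder is a direct adaptation of the walk counting in the proof of Theorem \ref{Hnormboundtheorem}.
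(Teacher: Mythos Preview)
Your approach—applying the trace power method directly to the centered matrix $H^TH - C\cdot Id$—is a reasonable alternative, but it differs from the paper's route, and your description of the cancellation is not quite right.

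The paper instead splits $M' = M - Id$ into its diagonal and off-diagonal parts and handles them separately. For the diagonal, each entry $M_{ii}-1 = \frac{1}{C}\sum_{I} h_I(p_i)^2 - 1$ is handled by a direct concentration argument: the sum is tightly concentrated around its mean $C$, so $|M_{ii}-1| = o(1)$ with high probability, and a union bound over the $m$ indices finishes this part. For the off-diagonal part $M''$, the paper runs the trace power method; the crucial gain is that now $i_j \neq i_{j+1}$ for every $j$, which forces every coordinate $k$ appearing in some $I_j$ to appear in at least two of the $I_j$'s (otherwise the expectation vanishes). This immediately gives $x \leq \frac{dq}{2}$ in the notation of Lemma~\ref{xylemma}, so the count per intersection pattern drops from $mn^{dq}$ to $m^{q/2+1}n^{dq/2}$, yielding $\|M''\| = \tilde O\bigl(\sqrt{m}/n^{d/2}\bigr)$. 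This matches your claimed $\tilde O(\sqrt{m/C})$, but is obtained without any cancellation analysis.

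Your description of the cancellation contains an error: expanding $(H^TH - C\cdot Id)^{2q}$ does not simply subtract $\mathrm{tr}\bigl((C\cdot Id)^{2q}\bigr) = mC^{2q}$ from $\mathrm{tr}\bigl((H^TH)^{2q}\bigr)$; the binomial produces many cross terms. You correctly flag this as ``the main obstacle,'' but the resolution is not the one you sketch. What the centering actually buys is that each \emph{factor} $(H^TH - C\cdot Id)_{i_j i_{j+1}}$ has mean zero; a correct argument would exploit this factor-by-factor, showing that any step which is independent of all other factors kills the term. Carrying this through for diagonal steps (where the factor is $\sum_I (h_I(p_{i_j})^2-1)$) requires additional bookkeeping beyond Lemma~\ref{xylemma}. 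The paper's diagonal/off-diagonal split sidesteps all of this: concentration handles the diagonal in one line, and the constraint $i_j\neq i_{j+1}$ on the off-diagonal gives the needed combinatorial saving directly.
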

\begin{proof}[Proof sketch:]
This theorem can be proved by considering the diagonal part and off-diagonal part of $M$. For the diagonal part of $M$, observe that $M_{ii} = \frac{1}{C}\sum_{I:0 <|I| \leq d}{h_I(p_i)^2}$. Since $C$ is the epected value of $\sum_{I:0 <|I| \leq d}{h_I(p)^2}$ for a random point $p$ and this value is tightly concentrated around its expectation, with high probability $M_{ii}$ will be $1 \pm o(1)$ for all $i$.
For the off-diagonal part of $M$, we can use the trace power method to bound its norm. Let $M''$ be the off-diagonal part of $M$.
\begin{lemma}
For all $d$ and all $\delta > 0$, for sufficiently large $n$, if $m \leq n^{d - \delta}$ then with high probability $||M''||$ is $\tilde{O}\left(\frac{\sqrt{m}}{n^{\frac{d}{2}}}\right)$
\end{lemma}
\begin{proof}[Proof sketch:]
Observe that 
\[
E\left[tr((M'')^q)\right] = \frac{1}{C^q}\sum_{i_1,\cdots,i_q,I_1,\cdots,I_q: \forall j, 0 < deg(I_j) \leq \frac{d}{2}, i_{j} \neq i_{j+1}}{E\left[\prod_{j=1}^{q}{H_{I_j{i_j}}H_{I_{j+1}{i_j}}}\right]}
\]
Up to the $\frac{1}{C^q}$ factor, this is the same expression we had for $E\left[tr((HH^T)^q)\right]$ except that since we are restricting ourselves to the off-diagonal part of $M$ we additionally have the constraint that $i_{j} \neq i_{j+1}$ for all $j$. This constraint implies that for any term with nonzero expected value, there is no $k$ such that $(I_j)_k > 0$ and $(I_{j'})_{k} = 0$ for all $j \neq j'$. This in turn implies that we only need to consider intersection patterns with $x \leq \frac{dq}{2}$, which means that the maximum number of choices for a given intersection pattern is at most 
$(m^{(\frac{q}{2}+1)}n^{\frac{dq}{2}})$ rather than $mn^{dq}$. Thus, our final bound on $E\left[tr((M'')^q)\right]$ will be 
\[
E\left[tr((M'')^q)\right]  \leq (4dq)^{8dq}\frac{1}{C^q}m^{(\frac{q}{2}+1)}n^{\frac{dq}{2}}
\]
Recalling that $C = \sum_{I:0 < |I| \leq d}{h_I(p)^2}$, $C$ is $\Theta(n^d)$ and the result can be shown in same the way as Theorem \ref{Hnormboundtheorem} using Markov's inequality (where we choose an even $q$).
\end{proof}
\end{proof}
\begin{remark}
In fact, our analysis of $M$ gives us improved norm bounds on $||H||$. In particular, with high probability $||H||$ is $\sqrt{C}(1 \pm o(1))$
\end{remark}
\subsection{Analysis of $||g^2||$}
In this subsection, we show how to probabilistically bound $||g^2||$.
\begin{definition}
We define the matrix $Q$ so that $Q_{J(I,I')}$ is the coefficient of $h_J$ in ${h_I}h_{I'}$.
\end{definition}
We have that
\[
g^2 = \frac{1}{C^2}\sum_{J}{\left(Q_{J(I,I')}\sum_{I,I'}{\sum_{j=1}^{m}{\sum_{j' = 1}^{m}{{x_j}H_{Ij}x_{j'}H_{I'j}}}}\right)h_J}
\]
Thus,
\begin{align*}
||g^2||^2 &= \frac{1}{C^4}\sum_{J}{\left(Q_{J(I,I')}\sum_{I,I'}{\sum_{j=1}^{m}{\sum_{j' = 1}^{m}{{x_j}H_{Ij}x_{j'}H_{I'j}}}}\right)^2} \\
&= \frac{1}{C^4}(x \otimes x)^T((H \otimes H)^TQ^TQ(H \otimes H))(x \otimes x)
\end{align*}
\begin{theorem}
For all $d,\delta$ and all sufficiently large $n$, if $m \leq n^{\frac{d}{2} - \delta}$ then with high probability, for all vectors $x$, $||Q(H \otimes H)(x \otimes x)||^2$ is $\tilde{O}(n^{2d}||x||^4)$
\end{theorem}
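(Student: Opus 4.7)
The plan is to recognize the vector $Q(H \otimes H)(x \otimes x)$ as the Hermite coefficient vector of the squared polynomial $p^2$, where
\[
p(y_1,\dots,y_n) = \sum_{I:\, 0 < |I| \leq d}{(Hx)_I\, h_I(y_1,\dots,y_n)}.
\]
Indeed, setting $y = Hx$, one has $p^2 = \sum_{I,I'}{y_I y_{I'}\, h_I h_{I'}}$, and expanding each product $h_I h_{I'}$ in the Hermite basis yields
\[
p^2 = \sum_J \left(\sum_{I,I'} Q_{J(I,I')}\, y_I y_{I'}\right) h_J,
\]
whose Hermite coefficient vector is exactly $Q(y \otimes y) = Q(H \otimes H)(x \otimes x)$. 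By Parseval's identity this gives $\|Q(H \otimes H)(x \otimes x)\|^2 = \|p^2\|^2 = \|p\|_4^4$.

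Once this identification is made, the remainder is a clean application of Gaussian hypercontractivity. Since $p$ has degree $d$ in Gaussian variables, hypercontractivity gives $\|p\|_4 \leq 3^{d/2}\|p\|_2$, hence $\|p\|_4^4 \leq 9^d\|p\|_2^4$. Parseval again yields $\|p\|_2^2 = \|Hx\|^2 \leq \|H\|^2\|x\|^2$. The hypothesis $m \leq n^{\frac{d}{2} - \delta}$ certainly implies $m \leq n^{d - \delta}$, so Theorem \ref{Hnormboundtheorem} applies and gives $\|H\|^2 = \tilde{O}(n^d)$ with high probability. Combining these bounds,
\[
\|Q(H \otimes H)(x \otimes x)\|^2 \leq 9^d\, \|H\|^4\, \|x\|^4 = \tilde{O}(n^{2d}\|x\|^4),
\]
as desired.

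The quantifier ordering is handled automatically: the only randomness lies in $H$, and on the high-probability event $\{\|H\| = \tilde{O}(n^{d/2})\}$ the bound above holds for every $x$ simultaneously. The conceptual crux is the observation that we only need to control inputs of the symmetric rank-one form $x \otimes x$, which is what allows us to reinterpret $\|Q(H\otimes H)(x \otimes x)\|^2$ as the fourth-moment $\|p\|_4^4$ of an honest degree-$d$ Gaussian polynomial $p$. I do not anticipate a serious obstacle in filling in the details; the only alternative route would be to bound the full operator norm of $Q$ (or equivalently the operator norm of the Gram matrix $Q^T Q$ of products $h_I h_{I'}$, which is far from diagonal), and that would be substantially more involved — the rank-one restriction sidesteps the difficulty entirely.
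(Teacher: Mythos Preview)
Your argument is correct, and it is genuinely different from (and cleaner than) what the paper does. The paper never makes the identification $\|Q(H\otimes H)(x\otimes x)\|^2 = \|p\|_4^4$; instead it splits $Q(H\otimes H)$ into the block-diagonal part $A$ (columns with $j=j'$) and the off-diagonal part $R$ (columns with $j\neq j'$), bounds $\|A(x\otimes x)\|$ by $\|x\|^2 \max_j \|A_{jj}\|$ via a direct combinatorial count, and bounds the full operator norm $\|R\|$ by a trace-moment calculation analogous to the proof of Theorem~\ref{Hnormboundtheorem}. The trace bound on $R$ is where the hypothesis $m \le n^{d/2-\delta}$ is actually consumed.

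Your route buys two things. First, it is far shorter: once one sees $\|Q(H\otimes H)(x\otimes x)\|^2=\|p\|_4^4$, Gaussian hypercontractivity and the existing bound on $\|H\|$ finish the proof in two lines, with no new trace computation. Second, and more interestingly, your proof only invokes Theorem~\ref{Hnormboundtheorem}, which needs merely $m \le n^{d-\delta}$; the stronger assumption $m \le n^{d/2-\delta}$ is never used. The paper explicitly flags this gap in the remark following Lemma~\ref{Rnormboundlemma}, asking whether one can get the same bound on $\|g^2\|$ for $m\le n^{d-\delta}$ by ``effectively using the fact that we are dealing with $(x\otimes x)$ rather than an arbitrary vector.'' Your hypercontractivity argument does exactly that and thus resolves this question. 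What the paper's approach buys in return is an operator-norm bound on $R$ itself (not just on its action on rank-one tensors), which is a stronger statement but one that is not needed for the theorem as stated.
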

\begin{proof}[Proof sketch]
We break $Q(H \otimes H)$ into two parts.
\begin{enumerate}
\item Let $A$ be the matrix such that $A_{J(j,j')} = (Q(H \otimes H))_{J(j,j')}$ if $j' = j$ and is $0$ otherwise.
\item Let $R$ be the matrix such that $R_{J{j,j'}} = (Q(H \otimes H))_{J(j,j')}$ if $j' \neq j$ and is $0$ otherwise.
\end{enumerate}
For the first part, we observe that letting $A_{(j,j)}$ be the $(j,j)$ column of $A$,
\[
||A(x \otimes x)|| = ||\sum_{j}{{x^2_j}A_{jj}}|| \leq \left(\sum_{j}{x^2_j}\right)\max_{j}{\{||A_{jj}||\}} = ||x||^2\max_{j}{\{||A_{jj}||\}}
\]
Thus, $||A(x \otimes x)||^2 \leq \left(\max_{j}{\{||A_{jj}||\}}\right)^2||x||^4$ and it is sufficient to probabilistically bound $\max_{j}{\{||A_{jj}||\}}$
\begin{lemma}
With high probability, $\max_{j}{\{||A_{jj}||\}}$ is $\tilde{O}(n^d)$
\end{lemma}
\begin{proof}
Observe that for all $j$, 
\[
||A_{jj}||^2 = \sum_{J,I_1,I_2,I_3,I_4}{Q_{J(I_1,I_2)}Q_{J(I_3,I_4)}H_{{I_1}j}H_{{I_2}j}H_{{I_3}j}H_{{I_4}j}}
\]
The entries of $Q$ are $O(1)$ and with high probability the entries of $H$ are $\tilde{O}(1)$, so we just need to bound the number of $I_1,I_2,I_3,I_4$ which give a nonzero contribution. For this, 
observe that for any nonzero term,
\begin{enumerate}
\item $I_1 \Delta I_2 \subseteq J$ and $I_3 \Delta I_4 \subseteq J$ where $\Delta$ is the symmetric difference.
\item $J \subseteq I_1 \cup I_2$ and $J \subseteq I_3 \cup I_4$
\end{enumerate}
Together, these observations imply that there cannot be a $k$ such that precisely one of $(I_1)_k,(I_2)_k,(I_3)_k,(I_4)_k$ is nonzero. In turn, this implies that there are $O(n^{2d})$ choices for $I_1,I_2,I_3,I_4$ which give a nonzero contribution and the result follows.
\end{proof}
For the second part, we bound the norm of $R$.
\begin{lemma}\label{Rnormboundlemma}
For all $d,\delta$ and all sufficiently large $n$, if $m \leq n^{\frac{d}{2} - \delta}$ then with high probability, $||R||$ is $\tilde{O}(n^{d})$.
\end{lemma}
\begin{proof}[Proof sketch]
This can be shown using the trace power method. We have that
\begin{align*}
&E\left[(R^TR)^q\right] = \\
&\sum_{\{j_1,j'_1,J_1,I_{11},I_{12},I_{13},I_{14}, \cdots, j_q,j'_q,J_q,I_{q1},I_{q2},I_{q3},I_{q4}\}}{E\left[\prod_{a=1}^{q}
{Q_{J_a(I_{a1},I_{a2})}Q_{J_a(I_{a3},I_{a4})}H_{{I_{a1}}j_a}H_{{I_{a2}}j'_a}H_{{I_{a3}}j_{a+1}}H_{{I_{a4}}j'_{a+1}}}\right]}
\end{align*}
Similar to before, we can partition this sum into intersection patterns and consider which patterns have nonzero expectation.
\begin{definition}
We take $x$ to be the number of distinct $k$ such that $(I_{ai}) > 0$ for some $a,i$ and we take $y$ to be the number of distinct $j_a$ and $j'_a$.
\end{definition}
\begin{lemma}
For any term with nonzero expected value, $y + \frac{2x}{d} \leq 4q + 2$
\end{lemma}
\begin{proof}[Proof sketch]
In any term with nonzero expected value, following the same logic as before, for each block $I_{a1},I_{a2},I_{a3},I_{a4}$ there cannot be a $k$ such that precisely one of $(I_{a1})_k,(I_{a2})_k,(I_{a3})_k,(I_{a4})_k$ is nonzero. If every $k$ which appears in a block appears in at least two blocks then $x \leq d$. We trivially have that $y \leq 2q$ so the result holds in this case.

If there is a $k$ which appears in only one block then this implies an equality between $j_a = j_{a+1}$ or an equality $j'_{a} = j'_{a+1}$. Roughly speaking, each such equality allows $d$ additional values $k$ to only appear in one block, decreasing $y$ by $1$ but increasing $x$ by $\frac{d}{2}$. This leaves $y + \frac{2x}{d}$ unchanged. To see why we have the $+2$, consider the extreme case when all the $j_a$ are equal and all the $j'_a$ are equal. In this case $y = 2$ and we can have $x = 2dq$.
\end{proof}
With this lemma in hand, since $m \leq n^{\frac{d}{2} - \delta}$, for any intersection pattern which gives a nonzero expected value, the total number of choices for the $j_a,j'_a,I_{a1},I_{a2},I_{a3},I_{a4}$ is $O(m^2n^{dq})$. Lemma \ref{Rnormboundlemma} can now be shown using the same techniques used to prove Theorem \ref{Hnormboundtheorem}.
\end{proof}
Putting these results together, it follows that with high probability, for all vectors $x$, $||Q(H \otimes H)(x \otimes x)||^2$ is $\tilde{O}(n^{2d}||x||^4)$, as needed.
\begin{remark}
While Lemma \ref{Rnormboundlemma} is essentially tight, it should be possible to obtain the same bound on $||g^2||^2$ for $m \leq n^{d - \delta}$ if we can effectively use the fact that we are dealing with $(x \otimes x)$ rather than an arbitrary vector, just as we did for $A$. We leave this as a question for further research.
\end{remark}
\end{proof}
\section{A non-negative polynomial which is far from being a sum of squares}\label{pseudoexpectationconstruction}
In this section, we complete our lower bound by giving a non-negative polynomial $f$ and showing that $f$ is far from being a sum of squares.
\begin{definition}
We take $f = \frac{r}{2}x^{r+2}y^{r} + \frac{r}{2}x^{r}y^{r+2} - (r+1)x^{r}y^{r} + (1+c)$ where $c \geq 0$, $r \geq 2$ is even, and we take $d = 2r+2$
\end{definition}
\begin{remark}
This polynomial $f$ is a generalization of the Motzkin polynomial (which is the case $r = 2,c=0$).
\end{remark}
\begin{lemma}
$f \geq c$
\end{lemma}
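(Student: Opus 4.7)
The plan is to show that $f - c = \tfrac{r}{2}x^{r+2}y^{r} + \tfrac{r}{2}x^{r}y^{r+2} - (r+1)x^{r}y^{r} + 1 \geq 0$ by a direct AM--GM argument, mirroring the classical non-negativity proof for the Motzkin polynomial (to which this reduces when $r = 2$).

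First, I would note that since $r$ is even, each of the three monomials $x^{r+2}y^{r}$, $x^{r}y^{r+2}$, and $x^{r}y^{r}$ has all even exponents, so each is a non-negative quantity for every $(x,y) \in \RR^2$. This means we may freely apply AM--GM to them. I would then split the positive part of $f-c$ into $r+1$ non-negative summands (using crucially that $r/2$ is an integer): $r/2$ copies of $x^{r+2}y^{r}$, $r/2$ copies of $x^{r}y^{r+2}$, and one copy of $1$.

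Next I would compute the geometric mean of these $r+1$ summands. The exponent of $x$ in the product is $\tfrac{r(r+2)}{2} + \tfrac{r \cdot r}{2} = r(r+1)$, and by symmetry the exponent of $y$ is also $r(r+1)$. So the geometric mean equals $\bigl(x^{r(r+1)}y^{r(r+1)}\bigr)^{1/(r+1)} = x^{r}y^{r}$. AM--GM then gives
\[
\tfrac{r}{2}x^{r+2}y^{r} + \tfrac{r}{2}x^{r}y^{r+2} + 1 \;\geq\; (r+1)x^{r}y^{r},
\]
which rearranges to $f - c \geq 0$, i.e.\ $f \geq c$.

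There is no real obstacle here: the only thing to double check is that $r/2$ is an integer (it is, because $r$ is even) so that we are actually applying AM--GM to $r+1$ honest summands, and that the three relevant monomials are pointwise non-negative (they are, because all their exponents are even). Both facts follow immediately from the hypothesis that $r \geq 2$ is even.
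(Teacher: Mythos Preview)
Your proof is correct and follows essentially the same AM--GM argument as the paper. The paper phrases it via weighted AM--GM with weights $\tfrac{r}{2r+2},\tfrac{r}{2r+2},\tfrac{2}{2r+2}$, while you unfold those weights into $r+1$ equal summands; after multiplying the paper's inequality through by $r+1$ the two arguments coincide line for line.
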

\begin{proof}
Observe that $(x^{r+2}y^{r})^{\frac{r}{2r+2}}(x^{r}y^{r+2})^{\frac{r}{2r+2}}(1)^{\frac{2}{2r+2}} = x^{r}y^{r}$ and $\frac{r}{2r+2} + \frac{r}{2r+2} + \frac{2}{2r+2} = 1$. By the AM-GM inequality, 
$\frac{r}{2r+2}(x^{r+2}y^{r}) + \frac{r}{2r+2}(x^{r}y^{r+2}) + \frac{2}{2r+2} \geq x^{r}y^{r}$ and the result follows.
\end{proof}
\begin{theorem}
$f$ is $\left(\frac{1}{(d^3\sqrt[r]{2+c})^{2d^4}}\right)$-far from being SOS.
\end{theorem}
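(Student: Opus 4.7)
The approach is to invoke Lemma~\ref{distlowerbound} with explicit degree-$2d$ pseudo-expectations $\tilde{E}$ such that $\tilde{E}[f]<0$ is bounded away from zero, together with a crude upper bound on the normalizer $\sum_{|I|\leq 2d}(\tilde{E}[h_I])^2$. Since $f$ is invariant under $x\mapsto -x$, $y\mapsto -y$, and $x\leftrightarrow y$, I would restrict attention to $\tilde{E}$ with the same symmetries, so that $\tilde{E}[x^a y^b]=0$ unless both $a$ and $b$ are even and $\tilde{E}[x^a y^b]=\tilde{E}[x^b y^a]$. Writing $m_{a,b}=\tilde{E}[x^{2a}y^{2b}]$, the target becomes to produce a symmetric PSD moment sequence with $m_{0,0}=1$ and $(r+1)m_{r/2,r/2}-r\,m_{r/2+1,r/2}>1+c$.

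For the construction, the natural starting point is the product moment sequence $m_{a,b}=T^{2(a+b)}$ for a scale $T=\Theta(\sqrt[r]{2+c})$; this is the moment sequence of the measure placing equal mass on $(\pm T,\pm T)$, so its moment matrix is automatically PSD. At $T=1$ (relevant when $c=0$) one computes $\tilde{E}[f]=rT^{2r+2}-(r+1)T^{2r}+(1+c)=c\ge 0$, which mirrors the pointwise bound $f\ge c$. To push $\tilde{E}[f]$ below zero, I would add a small perturbation that selectively increases $m_{r/2,r/2}$. A naive pointwise bump in $m_{r/2,r/2}$ alone does not preserve PSDness: the induced update of the moment matrix contains $2\times 2$ anti-diagonal blocks with negative eigenvalues. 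Instead, one takes a rank-one PSD update of the moment matrix of the form $\delta\,vv^T$ where $v$ is supported on monomials chosen so that the dominant contribution of $v\,v^T$ is precisely the $(x^{r/2}y^{r/2},x^{r/2}y^{r/2})$ entry. This produces an increase in $m_{r/2,r/2}$ as the leading-order effect, with small consistent shifts in a few neighbouring moments that must be absorbed by re-tuning $T$. Taking $\delta=\Theta(1)$ at the chosen scale gives $\tilde{E}[f]\le -\Omega(1)$.

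For the normalizer, since we are in only two variables there are only $O(d^2)$ multi-indices $I$ with $|I|\le 2d$, and every moment $\tilde{E}[x^ay^b]$ is bounded in absolute value by $O(T^{a+b})=O((2+c)^{(a+b)/r})$. Hermite polynomials of degree at most $2d$ have coefficient-sum bounded by $(2d)^{2d}$ (the same bound used in the proof of Theorem~\ref{Hnormboundtheorem}). Combining, $|\tilde{E}[h_I]|\le (2d)^{2d}(2+c)^{2d/r}$, and hence $\sum_I(\tilde{E}[h_I])^2\le (d^3\sqrt[r]{2+c})^{O(d)}$. With $(\tilde{E}[f])^2\ge\Omega(1)$, Lemma~\ref{distlowerbound} then yields a distance bound of the form $(d^3\sqrt[r]{2+c})^{-O(d)}$, which is strictly stronger than the stated $(d^3\sqrt[r]{2+c})^{-2d^4}$; the exponent $d^4$ in the theorem is a loose but convenient quantitative bookkeeping of all the constants.

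The main obstacle is designing the rank-one perturbation so that PSDness is maintained, the induced shifts to the moment sequence are mutually consistent (one linear functional, not just a modified moment matrix), and the shift to $\tilde{E}[f]$ is genuinely of order one. A cleaner alternative that sidesteps the construction is a Hahn--Banach argument: since $f-\eta$ fails to be a sum of squares for some explicit $\eta=\eta(r,c)>0$ (provable by a Hilbert--Motzkin-style coefficient comparison exploiting that $f$ attains its minimum $c$ at the four isolated points $(\pm 1,\pm 1)$), the degree-$2d$ SOS cone and the point $\{f-\eta\}$ are separable by a linear functional, providing the desired $\tilde{E}$ abstractly. Either route requires tracking the quantitative dependence of the moments on $r$ and $c$, and that quantitative dependence, controlled through the scale $T\asymp \sqrt[r]{2+c}$, is what ultimately produces the $\sqrt[r]{2+c}$ factor in the final bound.
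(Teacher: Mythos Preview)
Your approach is genuinely different from the paper's, and it contains a structural gap that cannot be repaired by ``re-tuning $T$''.

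The core obstruction is this: rank-one PSD updates $\delta\,vv^T$ to the moment matrix that \emph{respect the linear-functional structure} (i.e.\ $v_Iv_J$ depends only on the multiset $I+J$) are extremely rigid. If $v_1\neq 0$ one checks that $v_I=\beta^I$ for some point $\beta\in\RR^2$, so the update is literally a positive multiple of a point evaluation; if $v_1=0$ then the same bookkeeping forces $v$ to be supported only on the top degree $r+1$. In the first case the change in $\tilde{E}[f]$ is $\delta f(\beta)\ge \delta c\ge 0$. In the second case only the degree-$(2r{+}2)$ moments move, and the degree-$(2r{+}2)$ part of $f$ is $\tfrac{r}{2}(x^{r+2}y^r+x^ry^{r+2})$, whose update is $\delta s^rt^r(s^2+t^2)\ge 0$. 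Either way $\tilde{E}[f]$ cannot drop below its base value $\tilde{E}_0[f]\ge c\ge 0$. More generally, any $\tilde E$ built from a genuine measure satisfies $\tilde E[f]\ge c$ because $f\ge c$ pointwise; your point-mass base has a rank-$4$ moment matrix with no slack in the kernel, so any perturbation of the linear functional that is not itself a nonnegative combination of point evaluations immediately destroys PSDness. The Hahn--Banach alternative gives existence but no control on the size of the moments, so it cannot support the claimed $(d^3\sqrt[r]{2+c})^{O(d)}$ normalizer bound.

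The paper proceeds quite differently: it writes down an explicit, highly anisotropic pseudo-expectation with $\tilde{E}[x^ay^b]$ of order $B^{\,a-\frac{r+2}{r}b}$ for $a>b$ (and symmetrically), where $B=(kd^3)^{3d^3}$, together with a separate scale along the diagonal $a=b$. PSDness is then verified by a diagonal-dominance argument: for every off-diagonal pair one shows $\tilde{E}[x^{a_1+a_2}y^{b_1+b_2}]\le \tfrac{1}{2d^2}\sqrt{\tilde{E}[x^{2a_1}y^{2b_1}]\tilde{E}[x^{2a_2}y^{2b_2}]}$. The huge exponent $B$ is precisely what makes this dominance work, and it is also what forces the normalizer up to $(kd^3)^{O(d^4)}$; choosing $k=\sqrt[r]{2+c}$ gives $\tilde{E}[f]=c+(r+1)(1-k^r)\le -1$ and the stated bound via Lemma~\ref{distlowerbound}. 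So the $d^4$ in the exponent is not slack bookkeeping over an $O(d)$ argument; it is the genuine cost of this diagonal-dominance construction, and your route does not produce a valid $\tilde{E}$ with smaller moments.
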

\begin{proof}
We take the following pseudo-expectation values. 
\begin{definition}
Take $k > 1$ and take $B = (kd^3)^{(3d^3)}$. We split up the pseudo-expectation values $\tilde{E}[{x^a}{y^b}]$ into cases as follows
\begin{enumerate}
\item If $a > b$ then we take $\tilde{E}[{x^a}{y^b}] = \frac{(kd^3)^{(a^2 + (a+b)^2)}}{(kd^3)^{2d^2}}B^{a - \frac{r+2}{r}b}$.
\item If $b > a$ then we take $\tilde{E}[{x^a}{y^b}] = \frac{(kd^3)^{(b^2 + (a+b)^2)}}{(kd^3)^{2d^2}}B^{a - \frac{r+2}{r}b}$.
\item For all $a > 0$ we take $\tilde{E}[{x^a}{y^a}] = \frac{4^{(a^2)}}{4^{(r^2)}}{k^a}$
\item We have $\tilde{E}[1] = 1$.
\item We take $\tilde{E}[p(x,y)q(\text{other variables})] = \tilde{E}[p(x,y)]E[q]$. This guarantees that $\tilde{E}[h_{I}] = 0$ whenever $I$ contains a variable besides $x$ and $y$.
\end{enumerate}
\end{definition}
\begin{proposition}
$\tilde{E}[f] = c+(r+1)(1-k^r)$
\end{proposition}
\begin{proof}
This follows immediately from the observations that $\tilde{E}[{x^{r+2}}{y^{r}}] = \tilde{E}[{x^{r}}{y^{r+2}}] = 1$ and $\tilde{E}[x^{r}y^{r}] = k^r$.
\end{proof}
We need to show that these pseudo-expectation values give a PSD moment matrix.
\begin{proposition}
For all $a,b$ such that $a+b \leq d$, 
$$\tilde{E}[{x^a}{y^b}] \geq \min{\left\{\frac{(kd^3)^{(a^2 + (a+b)^2)}}{(kd^3)^{2d^2}}B^{a - \frac{r+2}{r}b}, 
\frac{(kd^3)^{(b^2 + (a+b)^2)}}{(kd^3)^{2d^2}}B^{a - \frac{r+2}{r}b}\right\}}$$
\end{proposition}
\begin{lemma}
For all $a_1,b_1,a_2,b_2$ such that $a_1 + b_1 \leq d$, $a_1 + b_2 \leq d$, $a_1 \neq a_2$ or $b_1 \neq b_2$, and $a_1 \neq b_1$ or $a_2 \neq b_2$, 
$$\tilde{E}[x^{a_1 + a_2}y^{b_1 + b_2}] \leq \frac{1}{2d^2}\sqrt{\tilde{E}[x^{2a_1}y^{2b_1}]\tilde{E}[x^{2a_2}y^{2b_2}]}$$
\end{lemma}
\begin{proof}
If $a_1 + a_2 > b_1 + b_2$ then we have that 
$$\sqrt{\tilde{E}[x^{2a_1}y^{2b_1}]\tilde{E}[x^{2a_2}y^{2b_2}]} \geq \frac{(kd^3)^{(2a_1^2 + 2(a_1+b_1)^2 + 2a_2^2 + 2(a_2+b_2)^2)}}{(kd^3)^{2d^2}}B^{a_1+a_2 - \frac{r+2}{r}(b_1+b_2)}$$
Thus, 
$$\frac{\tilde{E}[x^{a_1 + a_2}y^{b_1 + b_2}]}{\sqrt{\tilde{E}[x^{2a_1}y^{2b_1}]\tilde{E}[x^{2a_2}y^{2b_2}]}} \leq 
(kd^3)^{-((a_1-a_2)^2 + (a_1+b_1 - a_2 - b_2)^2)}$$
Since we either have that $a_1 \neq a_2$ or $a_1 + b_1 \neq a_2 + b_2$, this is at most $\frac{1}{kd^3} \leq \frac{1}{2d^2}$.

If $b_1 + b_2 > a_1 + a_2$ then we can use a symmetrical argument. If $a_1 + a_2 = b_1 + b_2$ and $a_1 > b_1$ then 
$$\sqrt{\tilde{E}[x^{2a_1}y^{2b_1}]\tilde{E}[x^{2a_2}y^{2b_2}]} = \frac{(kd^3)^{(2a_1^2 + 2(a_1+b_1)^2 + 2b_2^2 + 2(a_2+b_2)^2)}}{(kd^3)^{2d^2}}B^{a_1+b_2 - \frac{r+2}{r}(b_1+a_2)}$$
Since $a_1 + b_2 \geq b_1 + a_2 + 2$ we can't have $b_1 = a_2 = r$, $a_1+b_2 - \frac{r+2}{r}(b_1+a_2) \geq \frac{1}{r}$. Since $B \geq (kd^3)^{(3d^3)}$, 
$$\frac{(kd^3)^{(2a_1^2 + 2(a_1+b_1)^2 + 2b_2^2 + 2(a_2+b_2)^2)}}{(kd^3)^{2d^2}}B^{a_1+b_2 - \frac{r+2}{r}(b_1+a_2)} \geq 
\frac{(kd^3)^{\frac{1}{r}(3d^3)}}{(kd^3)^{2d^2}} \geq (kd^3)^{d^2}$$
Thus, 
$$\frac{\tilde{E}[x^{a_1 + a_2}y^{b_1 + b_2}]}{\sqrt{\tilde{E}[x^{2a_1}y^{2b_1}]\tilde{E}[x^{2a_2}y^{2b_2}]}} \leq \frac{k^d}{(kd^3)^{d^2}} \leq \frac{1}{2d^2}$$
\end{proof}
\begin{lemma}
For all $a,b$ such that $a \neq b$, $a \leq r$, $b \leq r$, and $a+b \leq d$, $\tilde{E}[x^{a+b}y^{a+b}] \leq \frac{1}{4^{|a-b|}}\sqrt{\tilde{E}[x^{2a}y^{2a}]\tilde{E}[x^{2b}y^{2b}]}$
\end{lemma}
\begin{proof}
We have that $\sqrt{\tilde{E}[x^{2a}y^{2a}]\tilde{E}[x^{2b}y^{2b}]} \geq \frac{4^{2(a^2 + b^2)}}{4^{(r^2)}}{k^{a+b}}$ while 
$\tilde{E}[x^{a+b}y^{a+b}] = \frac{4^{(a+b)^2}}{4^{(r^2)}}{k^{a+b}}$. Thus, 
$$\frac{\tilde{E}[x^{a+b}y^{a+b}]}{\sqrt{\tilde{E}[x^{2a}y^{2a}]\tilde{E}[x^{2b}y^{2b}]}} \leq \frac{1}{4^{(a-b)^2}} \leq \frac{1}{4^{|a-b|}}$$
\end{proof}
Combining all of these results, it can be shown that the moment matrix $M$ corresponding to $\tilde{E}$ is PSD. We now bound 
$\sum_{I:|I| \leq d}{(\tilde{E}[h_I])^2}$. By a large margin, the dominant terms will come from the leading coefficients of the degree $d$ Hermite polynomials for $x$ and $y$. The leading coefficient of the degree d Hermite polynomial is $\frac{1}{\sqrt{d!}}$ so we have that 
$$\sum_{I:|I| \leq d}{(\tilde{E}[h_I])^2} \leq \frac{2}{\sqrt{d!}}B^{\frac{d}{2}} \leq (kd^3)^{2d^4}$$
By Lemma \ref{distlowerbound}, for all $g$ of degree at most $d$ such that $g$ is a sum of squares, 
$$||f-g||^2 \geq \frac{(\tilde{E}[f])^2}{\sum_{I:|I| \leq d}{(\tilde{E}[h_I])^2}} = \frac{\left(c+(r+1)(1-k^r)\right)^2}{(kd^3)^{2d^4}}$$
Taking $k = \sqrt[r]{2+c}$, the right hand side is at least $\frac{1}{(d^3\sqrt[r]{2+c})^{2d^4}}$ and this completes the proof
\end{proof}
\begin{remark}
In fact, we could have taken any polynomial $f$ on a constant number of variables which is non-negative but not a sum of squares and this covers the case of polynomials with degree $0$ mod $4$ (the above construction only gives us polynomials of degree $2$ mod $4$). However, it would be preferable to have an example which really depends on all its variables. We leave this as a question for future work. Also, the constant is a rapidly decaying function of $d$ and it would be very interesting to obtain a more reasonable constant.
\end{remark}
\section{Future Work}
In this paper, we have shown that property testing whether a polynomial is a sum of squares using random samples and a tester which always accepts YES instances is hard; we need $n^{\Omega(d)}$ samples, which is not much less than we would need to completely determine the polnomial. That said, this work raises a number of questions, including but not limited to the following:
\begin{enumerate}
\item What can be shown for adaptive sampling and/or testers which only need to accept YES instances with high probability?
\item What is the threshold at which polynomial interpolation is likely to result in a polynomial with high norm? In other words, what is threshold at which $M$ stops being close to the identity?
\item Can we obtain almost tight bounds on $||g^c||$ for $c \geq 2$ for polynomial interpolation on random points?
\item If a degree $d$ polynomial $f$ is far from being a degree $d$ non-negative polynomial, must it be negative on a constant proportion of inputs?
\item Can we find a degree $d$ polynomial which is non-negative, far from being a sum of squares, and is far from being a junta (even after a change in coordinates)? Can we property test whether there is some basis in which a polynomial $f$ is a junta?
\item For a given $d$, is there a constant $\epsilon \in (0,1)$ where there is a more efficient way to property test whether a polynomial $f$ of norm $1$ is a sum of squares or is $\epsilon$-far from being a sum of squares?
\end{enumerate}
\bibliographystyle{plain}
\bibliography{soda17}
\begin{appendix}
\section{Example: 4-XOR polynomial }
In this section, we briefly discuss an attempt at creating a polynomial based on 4-XOR which is far from being non-negative yet passes the property test and why it does not quite work. The polynomial is constructed as follows.
\begin{enumerate}
\item Randomly choose $n^{2 - \delta}$ equations of the form $x_I = b_I$ where $I$ consists of 4 distinct elements of $[1,n]$ and $b_I \in \{-1,+1\}$.
\item Take the polynomial $p = \sum_{I}{-{b_I}x_I}$
\end{enumerate}
As shown by Grigoriev \cite{Grigoriev01b}, later rediscovered by Schoenebeck \cite{Schoenebeck08}, and explained in Boaz Barak's lecture notes \cite{LectureB}, we can construct pseudo-expectation values for a constant $d \geq 4$ as follows:
\begin{enumerate}
\item Start with the equations $x_I = b_I$ for every $I$ which was chosen.
\item As long as there are sets $I,J$ of size at most $d$ such that $|I \Delta J| \leq d$ and we have not yet set $b_{I \Delta J} = {b_I}b_J$, set $b_{I \Delta J} = {b_I}b_J$. If this gives a contradiction because we already set $b_{I \Delta J} = -{b_I}b_J$, halt and fail. However, with high probability this will not happen.
\item Once we are done, we define $\tilde{E}$ as follows.
\begin{enumerate}
\item For all sets $I$ of size at most $d$, take $\tilde{E}[x_I] = b_I$ if $b_I$ was set and take $\tilde{E}[x_I] = 0$ otherwise. 
\item For all sets $I$ of size between $d+1$ and $2d$, we set $\tilde{E}[x_I] = 0$.
\item For all multisets $I$ such that $|I| \leq 2d$ and $I$ contains some $x_i$ with multiplicity 2, we take $\tilde{E}[x_I] = \tilde{E}[x_{I \setminus \{x_i,x_i\}}]$
\end{enumerate}
\end{enumerate}
Observe that $\tilde{E}[p] = -n^{2-\delta}$, so $p$ is very far from being a sum of squares. In fact, using Lemma \ref{distlowerbound}, it can be shown that with high probability, $p$ is $(1 - o(1))||p||$-far from being a sum of squares. If we could add a sum of squares polynomial $g$ to $p$ so that $||g + p||$ is $O(||p||)$ and with high probability polynomially many random samples of $g+p$ will all have nonnegative values, then this would give another example of a polynomial which passes our property tester with high probability yet is far from being a sum of squares. However, this may not be possible. As a special case, if we try taking $g = C$ for a constant $C$ then we would need $C >> ||p||$ in order to make it so that with high probability, polynomially many random samples of $C + p$ all have nonnegative values.
\end{appendix}
\end{document}